\newcommand{\Oh}{O}
\newcommand{\weight}{{\rm \omega}}
\newcommand{\VD}{{\rm VD}}
\newcommand{\Vor}{{\rm Vor}}
\newcommand{\dist}{d}
\newtheorem{lemma}{Lemma}
\newtheorem{theorem}[lemma]{Theorem}
\newtheorem{proposition}[lemma]{Proposition}
\newtheorem{corollary}[lemma]{Corollary}
\begin{document}

\title{Better Tradeoffs for Exact Distance Oracles in Planar Graphs\thanks{This research was supported in part by Israel Science Foundation grant 794/13.}
}
\author{Paweł Gawrychowski, Shay Mozes, Oren Weimann, and Christian Wulff-Nilsen}
\date{}

\maketitle

\begin{abstract}
We present an $O(n^{1.5})$-space distance oracle for directed planar graphs that answers distance queries in $O(\log n)$ time. 
Our oracle both significantly simplifies and significantly improves the recent oracle of Cohen-Addad, Dahlgaard and Wulff-Nilsen [FOCS 2017], which uses $O(n^{5/3})$-space and answers queries in $O(\log n)$ time. We achieve this by designing an elegant and efficient point location data structure for Voronoi diagrams on planar graphs. 

We further show a smooth tradeoff between space and query-time. 
For any $S\in [n,n^2]$, we show an oracle of size $S$ that answers queries in $\tilde O(\max\{1,n^{1.5}/S\})$ time. 
This new tradeoff is currently the best (up to polylogarithmic factors) for the entire range of $S$
and improves by polynomial factors over all the previously known tradeoffs for the range $S \in [n,n^{5/3}]$.

\end{abstract}

\thispagestyle{empty}
\clearpage
\setcounter{page}{1}

\section{Introduction}
Computing shortest paths is a classical and fundamental algorithmic problem that has received considerable attention from the research community for decades. A natural data structure problem in this context is to compactly store information about distances in a graph in such a way that the distance between any pair of query vertices can be computed efficiently. A data structure that support such queries is called a {\em distance oracle}. Naturally, there is a tradeoff between the amount of space consumed by a distance oracle, and the time required by distance queries. Another quantity of interest is the preprocessing time required for constructing the oracle.

\paragraph{Distance oracles in planar graphs.}
It is natural to consider the class of planar graphs in this setting since planar graphs arise in many important applications involving distances, most notably in navigation applications on road maps. Moreover, planar graphs exhibit many structural properties that facilitate the design of very efficient algorithms. Indeed, distance oracles for planar graphs have been extensively studied. These oracles can be divided into two groups: {\em exact} distance oracles which always output the correct distance, and {\em approximate} distance oracles which allow a small stretch in the distance output. 
For approximate distance oracles, one can obtain near-linear space and near-constant query-time at the cost of a $(1+\epsilon)$ stretch (for any fixed $\epsilon$)~\cite{Thorup04,Klein2002,KawarabayashiST13,SommerICALP11,Wulff-Nilsen16}. 
In this paper we focus on the tradeoff between space and query-time of exact distance oracles for planar graphs.

\paragraph{Exact distance oracles.} The following results as well as ours all hold for directed planar graphs with real arc-lengths (but no negative length cycles). Djidjev~\cite{Djidjev96} and Arikati et al.~\cite{Arikati96} obtained distance oracles with the following tradeoff between space and query-time. For any $S\in [n,n^2]$, they show an oracle with space $S$ and query-time of $O(n^2/S^2)$. For $S\in [n^{4/3}, n^{1.5}]$, Djidjev's oracle achieves an improved bound of $O(n/\sqrt S)$. This bound (up to polylogarithmic factors) was extended to the entire range $S\in [n,n^2]$ in a number of papers~\cite{Xu2000,Cabello06,Nussbaum11,FR06,MozesSommer}. Wulff-Nilsen~\cite{WNthesis} showed how to achieve constant query-time with $O(n^2(\log\log n)^4/\log n)$ space, improving the above tradeoff for close to quadratic space. Very recently, Cohen-Addad, Dahlgaard, and Wulff-Nilsen~\cite{Cohen-AddadDW17}, inspired by the ideas of Cabello~\cite{Cabello17} made significant progress by presenting an oracle with $O(n^{5/3})$ space and $O(\log n)$ query-time. This is the first oracle for planar graphs that achieves truly subquadratic space and subpolynomial query-time. They also showed that with $S \ge n^{1.5}$ space, a query-time of $O(n^{2.5}/S^{1.5}\log n)$ is possible.
To summarize, prior to the results described in the current paper the best known tradeoff was $\tilde O(n/\sqrt S)$ query-time for space $S \in [n,n^{1.5}]$, $\tilde O(n^{2.5}/S^{1.5})$ query-time for space $S \in [n^{1.5},n^{5/3}]$, and $O(\log n)$ query-time for $S \in [n^{5/3},n^2]$.

\paragraph{Our results and techniques.}
\label{subsec:VorDiagramOracle}

In this paper we 
show a distance oracle with $O(n^{1.5})$ space and $O(\log n)$ query-time.
More generally, for any $r\leq n$ we construct a distance oracle with $O(n^{1.5}/\sqrt{r}+n\log r\log(n/r))$ space
and $O(\sqrt{r}\log n\log r)$ query-time.
This improves the currently best known tradeoffs for essentially the entire range of $S$: for space $S \in [n,n^{1.5}]$ we obtain an oracle with $\tilde O(n^{1.5}/S)$ query-time, while for space $S \in [n^{1.5},n^2]$, our oracle has query-time of $O(\log n)$. 

To explain our techniques we need the notion of an additively weighted Voronoi diagram on a planar graph.
Let $P=(V,E)$ be a directed planar graph, and let $S\subseteq V$ be a subset of the vertices, which are called the {\em sites} of the Voronoi diagram.
Each site $u\in S$ has a weight $\weight(u)\geq 0$ associated with it.
The distance between a site $u\in S$ and a vertex $v\in V$, denoted
by $\dist(u,v)$, is defined as $\weight(u)$ plus the length of the $u$-to-$v$ shortest path in $P$. 
The {\em additively weighted Voronoi diagram} of $(S,\weight)$ within $P$,
denoted $\VD(S,\weight)$, is a partition of $V$ into pairwise disjoint
sets, one set $\Vor(u)$ for each site $u\in S$. The set $\Vor(u)$,
which is called the {\em Voronoi cell} of $u$, contains all vertices in $V$ that are closer (w.r.t.~$\dist(\cdot,\cdot)$) to $u$ than to any other site in $S$
(assuming that the distances are unique). 
There is a dual representation $\VD^*(S,\weight)$ of a Voronoi diagram $\VD(S,\weight)$ as a planar graph with $O(|S|)$ vertices and edges. See Section~\ref{sec:prelims}.

We obtain our results using point location in additively weighted Voronoi diagrams. This approach is also the one taken in~\cite{Cohen-AddadDW17}. However, our construction is arguably simpler and more elegant than that of~\cite{Cohen-AddadDW17}. 
Our main technical contribution is a novel point location data structure for Voronoi diagrams (see below). 
Given this data structure, the description of the preprocessing and query algorithms of our $O(n^{1.5})$-space oracle are extremely simple and require a few lines each. 
In a nutshell, the construction is recursive, using simple cycle separators. We store a Voronoi diagram for each node $u$ of the graph. The sites of this diagram are the vertices of the separator and the weights are the distances from $u$ to each site. To get the distance from $u$ to $v$ it suffices to locate the node $v$ in the Voronoi diagram stored for $u$ using the point location data structure. Since the cycle separator has $O(\sqrt n)$ vertices, this yields an oracle requiring $O(n^{1.5})$  space.  

The oracles for the tradeoff are built upon this simple oracle by storing Voronoi diagrams for just a subset of the nodes in a graph (the so called boundary vertices of an $r$-division). This requires less space, but the query-time increases. This is because a node $u$ now typically does not have a dedicated Voronoi diagram. Therefore, to find the distance from $u$ to $v$, we now we need to locate $v$ in multiple Voronoi diagrams stored for nodes in the vicinity of $u$.

As we mentioned above, our main technical tool 
is a data structure that supports   
point location queries in Voronoi diagrams in $O(\log n)$ time. This is summarized in the following theorem.

\begin{theorem}\label{thm:extended}
Let $P$ be a directed planar graph with real arc-lengths, $r$ vertices, and no negative length cycles. Let $S$ be a set of $b$ sites that lie on a single face (hole) of $P$.
We can preprocess $P$ in $\tilde O(b\cdot r)$ time and $O(b\cdot r)$ space so that, given the dual representation of any additively weighted Voronoi diagram, $VD^*(S,\weight)$, we can extend it in $O(b)$ time and space to support the following queries. 
Given a vertex $v$ of $P$, report in $O(\log b)$ time the site $u$ such that $v$ belongs to $\Vor(u)$.

\end{theorem}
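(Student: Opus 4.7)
The plan combines a preprocessing of $P$ that is independent of $\weight$ with a short diagram-specific extension of $\VD^*$, then answers queries by descending a balanced hierarchical decomposition of the diagram using a constant number of $O(1)$-time distance comparisons at each level.

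For the preprocessing, I would compute and store the distance $\dist(s,v)$ from every site $s\in S$ to every vertex $v\in V$. Running a planar single-source shortest path algorithm from each of the $b$ sites fits in $\tilde O(b\cdot r)$ time and $O(b\cdot r)$ space, and once $\weight$ is presented the additively weighted distance $f_s(v) := \dist(s,v) + \weight(s)$ is available in $O(1)$ time for any site and any query vertex.

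Given $\VD^*(S,\weight)$, the key structural observation is that when the sites lie on a single face the dual has a near-tree structure: its Voronoi vertices have bounded degree, are linked by bisector paths, and together form a tree with $O(b)$ nodes whose $O(b)$ leaves correspond to the sites. I would extend $\VD^*$ by a balanced binary hierarchical decomposition of this tree computed in $O(b)$ time --- for example a linear-time centroid-style decomposition that exploits the bounded degree. To answer a query on $v$, we descend the decomposition: at each node we have a pivot that splits the remaining sites into a small number of contiguous arcs along the cyclic order around the hole, and comparing $f_{s_1}(v)$ with $f_{s_2}(v)$ for the $O(1)$ sites straddling the pivot determines in $O(1)$ which arc contains the site whose Voronoi cell holds $v$. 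After $O(\log b)$ such comparisons we reach a leaf, yielding the answer.

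The main obstacle is the underlying monotonicity lemma that justifies the comparisons. Because all sites lie on one face, the Voronoi cells are arranged in a cyclic order around the hole and the bisectors are pairwise non-crossing curves in the primal; this must be leveraged to show that the operation ``compare two sites straddling a pivot'' reduces the candidate arc by a constant factor, even though the raw comparison $f_{s_1}(v) < f_{s_2}(v)$ only certifies that $s_1$ is closer to $v$ than $s_2$, and not that the global minimizer lies on $s_1$'s side. Identifying at each level the correct pair of comparison sites, and proving that the comparison outcome correctly restricts the search to a smaller arc, is the technical heart of the argument.
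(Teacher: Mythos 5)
Your high-level architecture matches the paper's: precompute per-site shortest-path information in $\tilde O(br)$ time and $O(br)$ space, observe that $\VD^*$ is a tree, build a centroid decomposition of that tree of depth $O(\log b)$ in $O(b)$ time, and descend it doing constant work per level. But the step you flag at the end as the ``technical heart'' is a genuine, unresolved gap, and it cannot be closed with distance comparisons alone. At a centroid $f^*$ incident to three cells $\Vor(s_{i_0}),\Vor(s_{i_1}),\Vor(s_{i_2})$, knowing which of the three sites is closest to $v$ narrows the answer to two of the three subtrees (Lemma~\ref{lem:location}), but no comparison of the form $f_{s_a}(v)$ vs.\ $f_{s_b}(v)$ for $O(1)$ sites at the pivot discriminates between those two: it is not true in general that the minimizing site lies on the side of the closer of two straddling sites, which is exactly the monotonicity you note is missing. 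The paper does not try to supply such a monotonicity statement; it replaces the second comparison by a \emph{topological} test.

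Concretely, the paper adds to each face $f$ an artificial leaf $v_f$ hanging off the closest incident vertex in each tree $T_i$, and associates with the centroid $f^*$ the three shortest paths $p_j$ from $s_{i_j}$ to $v_f$. Each $p_j$ lies inside $\Vor(s_{i_j})$, so the two-path concatenations $p_j\circ\mathrm{rev}(p_{j-1})$ split the disk into regions each containing exactly one subtree of the centroid split. The query first picks the closest $s_{i_j}$ by distance, then asks whether $v$ is strictly left of, strictly right of, or on $p_j$; this side-of-path test (Proposition~\ref{lem:preorder}) is answered in $O(1)$ from preorder numbers and an LCA structure on $T_{i_j}$, and the artificial leaf is what makes the trichotomy exhaustive. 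This means your preprocessing is also too weak as stated: storing only the numbers $\dist(s,v)$ throws away the tree structure of the $T_i$, but the side-of-path test needs the trees themselves decorated with preorder and LCA data (still $O(br)$ space). Without Lemma~\ref{lem:location} (or an equivalent correctness claim for the three-way branching) and the preorder/LCA machinery to implement it, the $O(\log b)$ query bound is not established.
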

A data structure for the same task was described in~\cite{Cohen-AddadDW17}. Our data structure is both significantly simpler and more efficient.
Roughly speaking, the idea is as follows. We prove that $\VD^*(S,\weight)$ is a ternary tree. This allows us to use a straightforward centroid decomposition of depth $O(\log b)$ for point location. To locate the voronoi cell $\Vor(u)$ containing a node $v$ we traverse the centroid decomposition. At any given level of the decomposition we only need to known which of the three subtrees in the next level contains $\Vor(u)$. To this end we associate with each centroid node three shortest paths. These paths partition the plane into three parts, each containing exactly one of the three subtrees. Identifying the desired subtree then boils down to determining the position of $v$ relative to these three shortest paths. We show that this can be easily done by examining the preorder number of $v$ in the shortest path trees rooted at three sites.  

\paragraph{Roadmap.}  Theorem~\ref{thm:extended} is proved in Section~\ref{sec:pointlocation}  under a simplifying assumption
that every site belongs to its Voronoi cell. This  suffices to design our distance oracle with
space $O(n^{1.5})$ and query-time $O(\log n)$, which is done in Section~\ref{sec:overall}, assuming that Theorem~\ref{thm:extended} holds. Section~\ref{sec:tradeoff} describes
the improved space to query-time tradeoff. Finally, in Section~\ref{sec:extension} we describe how to remove the simplifying assumption.
Additional details and some omitted proofs appear in the appendix. 

\section{Preliminaries}\label{sec:prelims}
We assume that shortest paths are unique. 
This can be ensured in linear time by a random perturbation of the edge lengths~\cite{Isolation,Isolation2} or deterministically in near-linear time using lexicographic comparisons~\cite{CabelloCE13,HartvigsenMardon}.
It will also be convenient to assume that graphs are strongly connected; if not, we can always triangulate them with bidirected edges of infinite length.

\paragraph{Separators in planar graphs.} Given a planar embedded graph $G$, a Jordan curve separator is a simple closed curve in the plane that intersects the embedding of $G$ only at vertices. 
Miller~\cite{Miller86} showed that any $n$-vertex 
planar embedded graph has a Jordan curve separator of size $O(\sqrt n)$ such that the number of vertices on each side of the curve is at most $2n/3$. In fact, the balance of $2/3$ can be achieved with respect to any weight function on the vertices, not necessarily the uniform one. Miller also showed that the vertices of the separator ordered along the curve can be computed in $O(n)$ time.
An $r$-{\em division}~\cite{F87} of a planar graph $G$, for some  $r \in (1,n)$, is a decomposition of $G$
 into $O(n/r)$ pieces,  
where each piece has at most $r$ vertices and $O(\sqrt{r})$ \emph{boundary} vertices (vertices shared with other pieces).
There is an $O(n)$ time algorithm that computes an $r$-division 
of a planar graph with the additional property
that, in every piece, the number of faces of the piece that are not faces of the original graph $G$ is constant~\cite{KMS13,AvWZ13} (such faces are called {\em holes}). 

\paragraph{Voronoi diagrams on planar graphs.}
Recall the definition of additively weighted Voronoi diagrams $\VD(S,\weight)$ from the introduction. We write just $\VD$ when the particular $S$ and $\weight$ are not important, or when they are clear from the context.

We restrict our discussion to the case where the sites $S$ lie on a single face, denoted by $h$. 
We work with a dual representation of $\VD(S,\weight)$, denoted $\VD^{*}(S,\weight)$ or simply $\VD^{*}$. 
Let $P^*$ be the planar dual of $P$. Let $\VD^*_0$ be the subgraph of $P^*$ consisting of the duals of edges $uv$ of $P$ such that $u$ and $v$ are in different Voronoi cells. Let $\VD_1^*$ be the graph obtained from $\VD^*_0$ by contracting edges incident to a degree-2 vertex one after the other until no degree 2 vertices remain. 
The vertices of $\VD^*_1$ are called Voronoi vertices. A Voronoi vertex $f^*$ is dual to
a face $f$ such that the nodes incident to $f$ belong to at least three
different Voronoi cells. 
In particular, $h^{*}$ (i.e., the dual vertex corresponding to the face $h$ to which all the sites are incident) is a Voronoi vertex.
Each face of $\VD_1^{*}$ corresponds to a cell
$\Vor(v_{i})$, hence there are at most $|S|$ Voronoi vertices, and, by sparsity of planar graphs, the complexity (i.e., the number of nodes, edges and faces) of $\VD_1^*$ is $O(|S|)$. 
Finally, we define $\VD^*$ to be the graph obtained from $\VD_1^{*}$ after replacing the node $h^{*}$ by multiple
copies, one for each incident edge. The original Voronoi vertices are called
{\em real}. See Figure~\ref{fig:tree}. 

Given a planar graph $P$ with $r$ nodes and a set $S$ of $b$ sites on a single face $h$, one can compute any additively weighted Voronoi diagram $\VD(S,\weight)$ naively in $\tilde O(r)$ time by adding an artificial source node, connecting it to every site $s$ with an edge of length $\weight(s)$, and computing the shortest path tree. The  
dual representation $\VD^*(S,\weight)$ can then be obtained in additional $O(r)$ time by following the constructive description above. There are more efficient algorithms~\cite{Cabello17,GKMSW17} when one wants to construct many different additively weighted Voronoi diagrams for the same set of sites $S$. The basic approach is to invest superlinear time in preprocessing $P$, but then construct $\VD(S,\weight)$ for multiple choices of $\weight$ in $\tilde O(|S|)$ each instead of $\tilde O(r)$. Since the focus of this paper is on the tradeoff between space and query-time, and not on the preprocessing time, the particular algorithm used for constructing the Voronoi diagrams is less important. 

\begin{figure}[h]
\begin{center}
\includegraphics[width=0.65\textwidth]{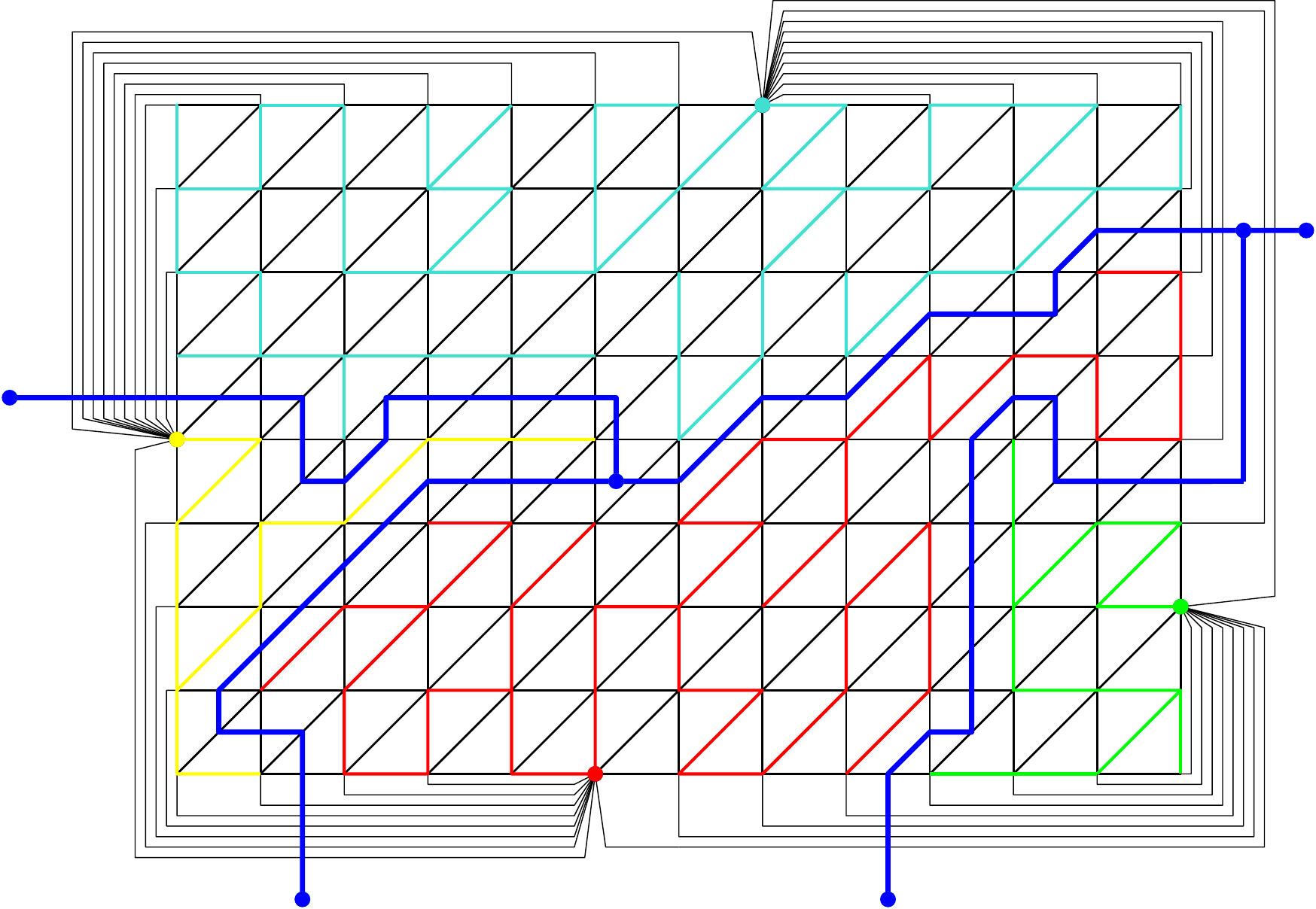}
\end{center}
\caption{A planar graph (black edges) with four sites on the infinite face together with  the dual Voronoi diagram $\VD^{*}$ (in blue). The sites are shown together with their corresponding shortest path trees (in turquoise, red, yellow, and green).
Two of the Voronoi vertices (in blue) are real.\label{fig:tree}}
\end{figure}

\section{The Oracle}\label{sec:overall}
In this section we describe our distance oracle assuming Theorem~\ref{thm:extended}. 
Let $G$ be a directed planar graph with non-negative arc-lengths. 
At a high level, our oracle is based on a recursive decomposition
of $G$ into pieces using Jordan curve separators.
Each piece $P=(V,E)$ is a subgraph of $G$. The boundary vertices of $P$ are vertices of $P$ that are incident (in $G$) to edges not in $P$. The holes of $P$ are faces of $P$ that are not faces of $G$. Note that every boundary vertex of $P$ is incident to some hole of $P$.

A piece $R=(V,E)$ is decomposed into two smaller pieces on the next
level of the decomposition as follows. We choose a Jordan curve separator
$C=(v_{1},v_{2},\ldots,v_{k})$, where $k=O(\sqrt{|V|})$. This separates
the plane into two parts and defines two smaller pieces $P$ and $Q$
corresponding to, respectively, the subgraphs of $R$ inside and the outside of $C$. Every edge of
$R$ is assigned to either $P$ or $Q$.
Thus, on every level of the recursive decomposition into pieces, an edge
of $G$ appears in exactly one piece.
The separators in levels congruent to 0 modulo 3 are chosen to balance the total number of nodes. The separators in levels congruent to 1 modulo 3 are chosen to balance the number of boundary nodes. The separators in levels congruent to 2 modulo 3 are chosen to balance the number of holes. This guarantees  that the number of holes in each piece is constant, and that the number of vertices and boundary vertices decrease exponentially along the recursion. In particular, the depth of the decomposition is logarithmic in $|V|$.
These properties are summarized in the following lemma whose proof is in the appendix.

\begin{restatable}{lemma}{separating}
\label{lem:separating}
Choosing the separators as described above guarantees that (i) each piece
has $O(1)$ holes, (ii) the number of nodes in a piece on the $\ell$-th
level in the decomposition is $O(n/c_{1}^{\ell/3})$, for some constant
$c_{1}>1$, (iii) the number of boundary nodes in a piece on the $\ell$-th
level in the decomposition is $O(\sqrt{n}/c_{2}^{\ell/3})$, for some constant $c_{2}>1$.
\end{restatable}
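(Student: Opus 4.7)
The plan is to maintain three worst-case quantities along the recursion---the number of nodes $n_\ell$, boundary vertices $b_\ell$, and holes $h_\ell$ in a piece at level $\ell$---and analyze how each evolves under a single separation step before chaining the estimates over the three-level cycle.

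First I would analyze one separation step. If a piece $R$ is split by a Jordan-curve separator $C$ of size $k=O(\sqrt{|V(R)|})$, then (a) the vertices of $C$ become boundary vertices of both children, so the boundary grows by at most $k$; (b) the curve $C$ itself bounds at most one new face on each side, contributing at most one new hole per child; and (c) every existing hole of $R$ is either entirely on one side of $C$ or is crossed by $C$, in which case it splits into at most two holes, one per child. By Miller's separator theorem with weights placed on the quantity being balanced, that quantity in each child is at most $2/3$ of its value in $R$, up to an additive term of $O(\sqrt{|V(R)|})$ for nodes and boundary and $O(1)$ for holes.

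Chaining over one cycle of three levels, each of the three quantities is balanced exactly once. Holes satisfy $h_{\ell+3}\le (2/3)\,h_\ell+O(1)$, a contractive recurrence whose fixed point is $O(1)$, establishing (i). Nodes satisfy $n_{\ell+3}\le (2/3)\,n_\ell+O(\sqrt{n_\ell})$; as long as $n_\ell$ is not already constant this reads $n_{\ell+3}\le (2/3+o(1))\,n_\ell$, which unfolds to $n_\ell=O(n/c_1^{\ell/3})$ for any constant $c_1\in(1,3/2)$, establishing (ii). Boundary vertices satisfy $b_{\ell+3}\le (2/3)\,b_\ell+O(\sqrt{n_\ell})$, and substituting $\sqrt{n_\ell}=O(\sqrt{n}/c_1^{\ell/6})$ from (ii) yields $b_\ell=O(\sqrt{n}/c_2^{\ell/3})$ for some constant $c_2>1$ by a standard geometric-series computation, establishing (iii).

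The main obstacle is the topological claim in (b)--(c): a single Jordan separator adds only $O(1)$ new holes per child. This requires verifying that the separator curve itself, together with the portions of the parent's holes that lie on one side of $C$, produces only a bounded number of new non-$G$-faces in each child; the argument relies on the fact that $C$ is a simple closed curve intersecting the embedding only at vertices. Once the additive-$O(1)$ hole-creation bound is in place, the hole-balancing step immediately yields the contractive recurrence for $h_\ell$, and the estimates for $n_\ell$ and $b_\ell$ follow by routine unfoldings. A secondary subtlety is confirming that the additive $O(\sqrt{|V(R)|})$ terms remain dominated by the $2/3$ factor at the relevant balancing step; this is precisely what forces a three-phase schedule rather than simply alternating two objectives.
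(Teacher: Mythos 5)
Your approach matches the paper's: track nodes, boundary vertices, and holes through the three-phase balancing schedule, obtain a contractive recurrence for each, and chain the bound $\sqrt{n_\ell}=O(\sqrt{n}/c_1^{\ell/6})$ into the boundary recurrence to get the geometric decay. Two small cautions. First, your step (c) is not quite the right picture: when the Jordan curve $C$ cuts through a hole of $R$, the pieces of that hole on the $C$-side of a child do not become separate holes of the child; they all merge (through the chord gaps) with the region outside $C$ into the \emph{single} new face bounded by $C$, which is exactly why each child gains at most one hole per separation step. Your phrasing ``splits into at most two holes, one per child'' would not by itself give a constant bound after many crossings, so you should replace it with the merging argument. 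Second, the paper's appendix does the boundary recurrence carefully, splitting into the cases $c'\le c$ and $c'>c$ (and using a slack factor $\sqrt{c'}$ in the first case to absorb the linear factor $\ell$); your ``standard geometric-series computation'' is the same calculation, but it is worth being explicit that the case $c'=c$ produces an extra polynomial factor that must be absorbed by slightly weakening the base of the exponent.
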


\paragraph{Preprocessing.}
We compute a recursive decomposition of $G$ using Jordan separators as described above. 
For each piece $R=(V_R,E_R)$ in the recursive decomposition we perform the following preprocessing. 
We compute and store, for each boundary node $v$ of $R$, the shortest path tree $T^R_v$ in $R$ rooted at $v$.
Additionally, we store for every node $u$ of $R$ the distance from $v$ to $u$  and the distance from $u$ to $v$ in the whole $G$.
For a non-terminal piece $R$, let $P=(V_{P},E_{P})$ and $Q=(V_{Q},E_{Q})$ be the two pieces into which $R$ is separated.
For every node $u\in V_{Q}$ and for every hole $h$ of $P$ we store
an additively weighted Voronoi diagram $\VD(S_h,\weight)$ for $P$, where the set of sites $S_h$ is the set of boundary nodes of $P$ incident to the hole $h$, and the additive weights $\weight$ correspond to the distances in $G$ from $u$ to each site in $S_h$. We enhance each Voronoi diagram with the point location data structure of Theorem~\ref{thm:extended}.
We also store the same information with the roles of $Q$ and $P$ exchanged.

\paragraph{Query.}
To compute the distance from $u$
to $v$, we traverse the recursive decomposition starting from the piece that  corresponds to the
whole initial graph $G$. Suppose that the current piece is $R=(V,E)$,
which is partitioned into $P$ and $Q$ with a Jordan curve separator
$C$. If $v\in C$ then, because the nodes of $C$ are boundary nodes
in both $P$ and $Q$, we return the additive weight $\weight(v)$ in the Voronoi diagram stored for $u$, which is equal
to the distance from $u$ to $v$ in $G$. Similarly, if $u\in C$ then we retrieve
and return the distance from $u$ to $v$ in the whole $G$.
The remaining case is that both $u$ and $v$ belong
to a unique piece $P$ or $Q$. If both $u$ and $v$ belong
to the same piece on the lower level of the decomposition, we continue
to that piece. Otherwise, assume without loss of generality that $u\in Q$
and $v\in P$. Then, the shortest path from $u$ to $v$ must
go through a boundary node $v_{i}$ of $P$. We therefore perform a point location query for $v$ in each of the Voronoi diagrams stored for $u$ and for some hole $h$ of $P$. Let $s_1,\dots,s_g$ be the sites returned by these queries, where $g=O(1)$ is the number of holes of $P$. The distance in $G$ from $u$ to $s_i$ is $\weight(s_i)$, and the distance in $P$ from $s_i$ to $v$ is stored in $T^P_{s_i}$. We compute the sum of these two terms for each $s_i$, and return the minimum sum computed.

\paragraph{Analysis.}
First note that the query-time is $O(\log n)$ since, at each step of the traversal, we either descend to a smaller piece in $O(1)$ time or terminate after having found the desired distance in $O(\log n)$ time by $O(1)$ queries to a point location structure.

Next, we analyze the space.
Consider a piece $R$
with $O(1)$ holes.
Let $n(R)$ and $b(R)$ denote the number of nodes and boundary nodes of $R$, respectively.
The trees $T^R_u$ and the stored distances in $G$ require a total of $O(b(R)\cdot n(R))$ space.
Let $R$ be further decomposed into pieces $P$ and $Q$.
We bound the space used by all Voronoi diagrams created for $R$.
Recall that every Voronoi diagram and point location
structure corresponds to a node $u$ of $P$ and a hole of $Q$, or
vice versa. 
The size of each additively weighted Voronoi diagram stored for a node of $P$
is $O(b(Q))$, so $O(n(P)\cdot b(Q))$ for all nodes of $P$.
The additional space required by Theorem~\ref{thm:extended} is 
also $O(n(P)\cdot b(Q))$.
Finally, for every node of $R$ we record if it belongs to the Jordan curve
separator used to further divide $R$, and, if not, to which of the resulting two pieces it belongs. This takes only $O(n(R))$ space.
The total space for each piece $R$ is thus $O(n(R)\cdot b(R))$ plus
$O(n(P)\cdot b(Q)+n(Q)\cdot b(P))$ if $R$ is decomposed into $P$ and $Q$.

We need to bound the sum of $O(n(R)\cdot b(R))$ over all the pieces $R$.
Consider all pieces $R_{1},R_{2},\ldots,R_{s}$ on the same level $\ell$ in the decomposition.
Because these pieces are edge-disjoint, $\sum_{i}n(R_{i})=O(n)$.
Additionally, $b(R_{i})=O(\sqrt{n}/c^{\ell})$ for any $i$, where $c>1$,
so $\sum_{i} O(n(R_{i})\cdot b(R_{i})) = O(n^{1.5}/c^{\ell})$.
Summing over all levels $\ell$, this is $O(n^{1.5})$. The sum
of $O(n(P)\cdot b(Q)+n(Q)\cdot b(P))$ over all pieces $R$ that are decomposed
into $P$ and $Q$ can be analysed with the same reasoning
to obtain that the total size of the oracle is $O(n^{1.5})$.

Finally, we analyze the preprocessing time. For each piece $R$, the preprocessing of Theorem~\ref{thm:extended} takes $\tilde O(n(R)\cdot b(R))$.
Then, we compute $O(n(R))$ different additively weighted Voronoi diagrams for $R$. Each diagram is built in $\tilde O(n(R))$ time, and
its representation is extended in $O(b(R))$ time to support point location with Theorem~\ref{thm:extended}. The total preprocessing time
for $R$ is hence $\tilde O((n(R))^2)$, which sums up to $\tilde O(n^{2})$ overall by Lemma~\ref{lem:separating}.
We also need to compute the distances between pairs of vertices in $G$. This can be also done in $\tilde O(n^{2})$ 
total time by computing the shortest path tree rooted at each vertex in $\tilde O(n)$~\cite{KleinMW10}.

\section{Point Location in Voronoi Diagrams}\label{sec:pointlocation}

In this section we prove Theorem~\ref{thm:extended}. 
Let $P$ be a piece (i.e., a planar graph), and $S$ be a set of sites that lie on a single face (hole) $h$ of $P$. 

Our goal is to preprocess $P$ once in $O(|P||S|)$ time and space, and then, given any additively weighted Voronoi diagram
$\VD^*(S,\weight)$ (denoted $\VD^*$ for short), preprocess it in $O(|S|)$ time and space so as to answer point location queries in $O(\log |S|)$ time. 

We assume that the hole $h$ incident to all nodes in $S$ is the external face.
We assume that
all nodes of $P^*$, except for $h^{*}$, have degree 3. 
This can be achieved by triangulating $P$ with infinite length edges.
We also assume that all nodes
incident to the external face belong to $S$ and denote them
$s_{1},s_{2},\ldots,s_{b}$, according to their clockwise order on $h$. We assume $b\geq 3$ (for any constant $b$ point location is trivial).

Recall that for a site $u$ and a vertex $v$ we define $\dist(u,v)$ as $\weight(u)$ plus the length of the $u$-to-$v$ shortest path in $P$.
We further assume that no Voronoi cell is empty. That is, we assume that, for every pair of distinct sites $u,u' \in S$, $\weight(u) < \dist(u',u)$. 
If this assumption does not hold, let $S'$ be the subset of the sites whose Voronoi cells are non empty. We can embed inside the hole $h$ infinite length edges between every pair of consecutive sites in $S'$, and then again triangulate with infinite length edges.
This results in a new face $h'$ whose vertices are the sites in $S'$. Replacing $S$ with $S'$ and $h$ with $h'$ enforces the assumption. 
Note that since this transformation changes $P$, it is not suitable when working with Voronoi diagrams constructed by algorithms that preprocess $P$, such as the ones in~\cite{Cabello17,GKMSW17} 
In Section~\ref{sec:extension} we prove Theorem~\ref{thm:extended} without this assumption.

\subsection{Preprocessing for $P$}
The preprocessing for $P$ consists of computing shortest path trees $T_v$ for every boundary node $v \in S$, decorated with some additional information which we describe next. We stress that the additional information does not depend on any  weights $\weight$ (which are not available at preprocessing time).

Let $T_i$ be the shortest path tree in $P$ rooted at $s_i$. 
For a technical reason that will become clear soon, we add some artificial vertices to $T_i$. For each face $f$ of $P$ other than $h$, we add an artificial vertex $v_f$ whose embedding coincides with the embedding of the dual vertex $f^*$. Let $y_f$ be closest vertex to $s_i$ in $P$ that is incident to $f$. We add a zero length arc $y_fv_f$ to $T_i$. Note that $v_f$ is a leaf of $T_i$. Let $p_{i,f}$ be the shortest $s_i$-to-$v_f$ path in $T_i$. We say that a vertex $v$ of $T_i$ is to the right (left) of $p_{i,f}$ if the shortest $s_i$-to-$v$ path emanates right (left) of $p_{i,f}$. Note that, since $v_f$ is a leaf of $T_i$, $v$ is either right of $p_{i,f}$, left of $p_{i,f}$, or a vertex of $p_{i,f}$; the goal of adding the artificial vertices $v_f$ is to guarantee that these are the only options. The following proposition can be easily obtained using preorder numbers and a lowest common ancestor (LCA) data structure~\cite{BenderLCA} for $T_i$.

\begin{proposition}\label{lem:preorder}
There is a data structure with $O(|P|)$ preprocessing time that can decide in $O(1)$ time if for a given query vertex $v$ and query face $f$, $v$ is right of $p_{i,f}$, left of $p_{i,f}$, or a vertex of $p_{i,f}$.
\end{proposition}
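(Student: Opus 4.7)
The plan is to preprocess $T_i$ (including the added artificial leaves $v_f$) with a single DFS that at every vertex visits its children in the cyclic order inherited from the planar embedding of $P$, concretely in counterclockwise order starting from the parent edge (and, for $s_i$, from a canonical reference ray pointing into the external face $h$). This yields a preorder number $\mathrm{pre}(u)$ for every vertex $u$ of $T_i$. On top of this I build a constant-time LCA data structure~\cite{BenderLCA} for $T_i$, and a table that maps each face $f \neq h$ to its artificial leaf $v_f$. Since $T_i$ has $O(|P|)$ vertices, all of this takes $O(|P|)$ time and space.

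To answer a query $(v,f)$ I compute $w = \mathrm{LCA}(v, v_f)$ in $O(1)$ time. If $w = v$, then $v$ is an ancestor of $v_f$ in $T_i$, which is exactly the condition that $v$ lies on $p_{i,f}$, and I report so. Otherwise, because $v_f$ is a leaf, $v$ and $v_f$ lie in the subtrees of two distinct children $c$ and $c'$ of $w$, where $c'$ is the child through which $p_{i,f}$ leaves $w$. In the planar embedding the two edges $(w,\mathrm{parent}(w))$ and $(w,c')$ split the cyclic order of edges incident to $w$ into two arcs that correspond precisely to the two sides of $p_{i,f}$ in a small neighborhood of $w$; since the subtree of $c$ lies entirely in one of these two sides, so does $v$. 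The key observation is that, because the DFS visits the children of $w$ in counterclockwise order starting from the parent edge, the children whose edges lie in one arc are exactly those visited before $c'$ and those whose edges lie in the other arc are exactly those visited after. Thus the side of $v$ is determined by whether $c$ precedes $c'$ in the DFS order, and because sibling subtrees occupy disjoint contiguous preorder intervals, this reduces to a single comparison between $\mathrm{pre}(v)$ and $\mathrm{pre}(v_f)$: one ordering yields ``right'' and the other yields ``left'', under a fixed convention that matches the DFS orientation.

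The main obstacle is to verify carefully that the counterclockwise-from-parent DFS ordering really does realize the geometric notions of ``right'' and ``left'' of $p_{i,f}$ as one walks along it, including at internal vertices of $p_{i,f}$ where the path may turn sharply between its incoming and outgoing edges, and at the root $s_i$ where no parent edge is available and a consistent reference direction into $h$ must be fixed. This is where the artificial leaves $v_f$ pay off: because each $v_f$ is attached as a zero-length leaf placed in the interior of the face $f$, no real vertex of $P$ can be a proper descendant of $v_f$ in $T_i$, so the LCA of a real query vertex $v$ with $v_f$ equals $v$ iff $v$ lies on $p_{i,f}$, and every other real vertex is unambiguously on exactly one side of the path. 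Taken together, the preprocessing fits in $O(|P|)$ and each query is answered in $O(1)$, as required.
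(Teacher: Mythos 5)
Your proof is correct and uses exactly the approach the paper hints at: the paper dismisses this proposition as ``easily obtained using preorder numbers and a lowest common ancestor (LCA) data structure,'' and your argument — a counterclockwise-from-parent DFS to get a planarity-respecting preorder, an LCA query to detect the on-path case, and a single preorder comparison between $v$ and $v_f$ (using the disjoint-contiguous-interval property of sibling subtrees) to decide the side — is precisely the intended instantiation of that hint, with the artificial leaves $v_f$ serving their stated role of eliminating the ``descendant of the path's endpoint'' ambiguity.
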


We compute and store the shortest path trees $T_i$ rooted at each site $s_i$, along with preorder numbers and LCA data structures required by Proposition~\ref{lem:preorder}. This requires preprocessing time $\tilde O(|P||S|)$ by computing each $T_{i}$ in $\tilde O(|P|)$ time~\cite{KleinMW10},
and can be stored in $O(|P||S|)$ space. 

\subsection{Handling a Voronoi diagram $\VD^*(S,\weight)$}

We now describe how to handle an additively weighted  Voronoi diagram $\VD^* = \VD^*(S,\weight)$. This consists of a preprocessing stage and a query algorithm.
Handling $\VD^*$ crucially relies on the fact that, under assumption that each site is in its own Voronoi cell, $\VD^*$ is a tree.  
\begin{lemma}
\label{lem:tree}
$\VD^{*}$ is a tree.
\end{lemma}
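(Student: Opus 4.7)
My plan is to prove that $\VD^{*}$ is a tree by establishing two things: $\VD^{*}$ is acyclic, and $\VD^{*}$ is connected. The linchpin is a single geometric observation: \emph{every face of $\VD^{*}_{1}$ has $h^{*}$ on its boundary.} Indeed, every face of $\VD^{*}_{1}$ is in bijection with one of the (non-empty, by the standing assumption) Voronoi cells $\Vor(s_{i})$, and for any site $s_{i}$ the two $h$-edges adjacent to $s_{i}$ lead to the sites $s_{i-1}$ and $s_{i+1}$, which by assumption lie in cells other than $\Vor(s_{i})$. Hence both of these $h$-edges are Voronoi-boundary edges whose duals survive in $\VD^{*}_{0}$ and are incident to $h^{*}$, and since $h^{*}$ has degree $b \ge 3$ there it is never contracted when producing $\VD^{*}_{1}$. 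Those dual edges therefore remain on the boundary of the face of $\VD^{*}_{1}$ corresponding to $\Vor(s_{i})$, witnessing $h^{*}$ on that boundary.

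Given the observation, acyclicity is immediate. Each copy of $h^{*}$ in $\VD^{*}$ has degree $1$, so any cycle of $\VD^{*}$ would lie entirely inside $\VD^{*}_{1} \setminus \{h^{*}\}$. By the Jordan curve theorem such a cycle separates the plane into two regions, and $h^{*}$ sits in only one of them; the other region must contain at least one face of $\VD^{*}_{1}$, and that face is then not incident to $h^{*}$, contradicting the observation.

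For connectedness, I would argue that any real Voronoi vertex $v$ lies on the boundary of some face $F$ of $\VD^{*}_{1}$, and since $F$ is incident to $h^{*}$, walking along the boundary of $F$ starting at $v$ reaches $h^{*}$ without leaving $\VD^{*}_{1}$. So $\VD^{*}_{1}$ is connected, and since in $\VD^{*}$ each leaf (copy of $h^{*}$) is attached to the rest via its unique incident edge, $\VD^{*}$ is also connected.

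The main obstacle I anticipate is making the key geometric observation fully rigorous from the formal definitions of $\VD^{*}_{0}$ and $\VD^{*}_{1}$ as a subgraph and contraction of $P^{*}$. One has to check that the $h$-edges incident to $s_{i}$ really do survive in $\VD^{*}_{0}$ (which uses that consecutive sites live in distinct Voronoi cells), and that the subsequent degree-$2$ contractions cannot detach the face corresponding to $\Vor(s_{i})$ from $h^{*}$ (which uses $\deg_{\VD^{*}_{0}}(h^{*}) \ge 3$). Everything else then follows by clean topological reasoning.
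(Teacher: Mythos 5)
Your acyclicity argument is sound and essentially matches the paper's (the paper phrases it in terms of the primal vertices enclosed by a putative cycle being in a site-free Voronoi cell, while you phrase it in terms of a face of $\VD^{*}_{1}$ not incident to $h^{*}$; these are two faces of the same coin). The preliminary observation that every face of $\VD^{*}_{1}$ is incident to $h^{*}$ is also correct, and the justification via the two $h$-edges adjacent to each $s_{i}$ is a clean way to see it.

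The gap is in the connectivity step, and it is a real one. You show that $\VD^{*}_{1}$ is connected, and then assert that because each copy of $h^{*}$ in $\VD^{*}$ is a leaf, $\VD^{*}$ is also connected. This inference is invalid: $\VD^{*}$ is obtained from $\VD^{*}_{1}$ by \emph{splitting} $h^{*}$ into $\deg(h^{*})$ separate degree-one copies, and splitting a vertex in this way can certainly disconnect a graph (think of a bowtie with $h^{*}$ as the shared vertex --- it is connected, yet splitting $h^{*}$ into leaves breaks it into two pieces). Equivalently, ``the rest'' to which your leaves are attached is $\VD^{*}_{1}\setminus\{h^{*}\}$, and you have not shown that this graph is connected; $h^{*}$ could a priori be a cut vertex of $\VD^{*}_{1}$. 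What is needed --- and what the paper supplies --- is an argument that the split does not sever anything: the boundary of each cell $\Vor(s_{i})$ is a cycle through $h^{*}$ in $\VD^{*}_{1}$, and it must pass through $h^{*}$ \emph{exactly once}, because if it passed through twice the cell's intersection with the boundary walk of $h$ would consist of at least two arcs, each contributing a distinct site to $\Vor(s_{i})$, contradicting the standing assumption that each site lies in its own cell. Hence each cell boundary remains a connected path in $\VD^{*}$, and since consecutive cells $\Vor(s_{i})$, $\Vor(s_{i+1})$ share the dual of the edge $s_{i}s_{i+1}$, the union of all these boundaries is connected. Your proof as written does not use the one-site-per-cell assumption in the connectivity step, and that is precisely why it does not go through.
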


\begin{proof}
Suppose that $\VD^*$ contains a cycle $C^*$. Since the degree of each copy of $h^*$ is one, the cycle does not contain $h^*$. Therefore, since all the sites are on the boundary of the hole $h$, the vertices of $P$ enclosed by $C^*$ are in a Voronoi cell that contains no site, a contradiction.

To prove that $\VD^*$ is connected, observe that in $\VD_1^*$, every Voronoi cell is a face (cycle) going through $h^*$. Let $C^*$ denote this cycle. If $C^*$ is disconnected in $\VD^*$ then, in $\VD_1^*$, $C^*$ must visit $h^*$ at least twice. But this implies that the cell corresponding to $C^*$ contains more than a single site, contradiction our assumption. Thus, the boundary of every Voronoi cell is a connected subgraph of $\VD^*$. Since the boundaries of the cell of $s_i$ and the cell of $s_{i+1}$ both contain the dual of the edge $s_is_{i+1}$, it follows that the entire modified $\VD^*$ is connected.
\end{proof}

We briefly describe the intuition behind the design of the point location data structure. To find the Voronoi cell $\Vor(s)$ to which a query vertex $v$ belongs, it suffices to identify an edge $e^*$ of $\VD^*$ that is adjacent to $\Vor(s)$. Given $e^*$ we can simply check which of its two adjacent cells contains $v$ by comparing the distances from the corresponding two sites to $v$.
Our point location structure is based on a {\em centroid decomposition} of $\VD^*$ into connected subtrees, and on the ability to determine, in constant time, which of the subtrees is the one that contains the desired edge $e^*$.

\paragraph{Preprocessing.}
The preprocessing consists of just computing a centroid decomposition of $\VD^*$. 
A {\em centroid} of an $n$-node tree $T$ is a a node $u\in T$ such that removing $u$ and replacing it with copies, one for each edge incident to $u$, results in a set of trees, each with at most $\frac{n+1}{2}$ edges. A centroid always exists in a tree with more than one edge.
In every step of the centroid decomposition of $\VD^*$, we work with
a connected subtree $T^*$ of $\VD^*$. Recall that 
there are no nodes of degree 2 in $\VD^*$.
If there are
no nodes of degree 3, then $T^*$ consists of a single edge of $\VD^*$, and the decomposition terminates. Otherwise, we choose a centroid $f^*$, and partition $T^*$
into the three subtrees $T^*_{0},T^*_{1},T^*_{2}$
obtained by splitting $f^*$ into three copies, one for each edge incident to $f^*$. Clearly, the depth of the recursive decomposition is $O(\log |S|)$. The decomposition can computed in $O(|S|)$ time and be represented as a ternary tree, which we call the {\em decomposition tree}, in $O(|S|)$ space.

\paragraph{Point location query.}
We first describe the structure that gives rise to the efficient query, and only then describe the query algorithm.
Consider a centroid $f^*$ used at some step of the decomposition. Let $s_{i_0}, s_{i_1}, s_{i_2}$ denote the three sites adjacent to $f^*$, listed in clockwise order along $h$. Let $f$ be the face of $P$ whose dual is $f^*$. Let $y_0, y_1, y_2$ be the three vertices of $f$, such that $y_j$ is the vertex of $f$ in $\Vor(s_{i_j})$.  Let $e^*_j$ be the edge of $VD^*$ incident to $f^*$ that is on the boundary of the Voronoi cells of $s_{i_j}$ and $s_{i_{j-1}}$ (indices are modulo 3). Let $T^*_j$ be the subtree of $T$ that contains $e^*_j$. Let $p_j$ denote the shortest $s_j$-to-$v_f$ path. Note that the vertex preceding $v_f$ in $p_j$ is $y_j$. See Figure~\ref{fig:schematic} (right).

\begin{figure}[h]
\begin{center}
\includegraphics[width=0.57\textwidth]{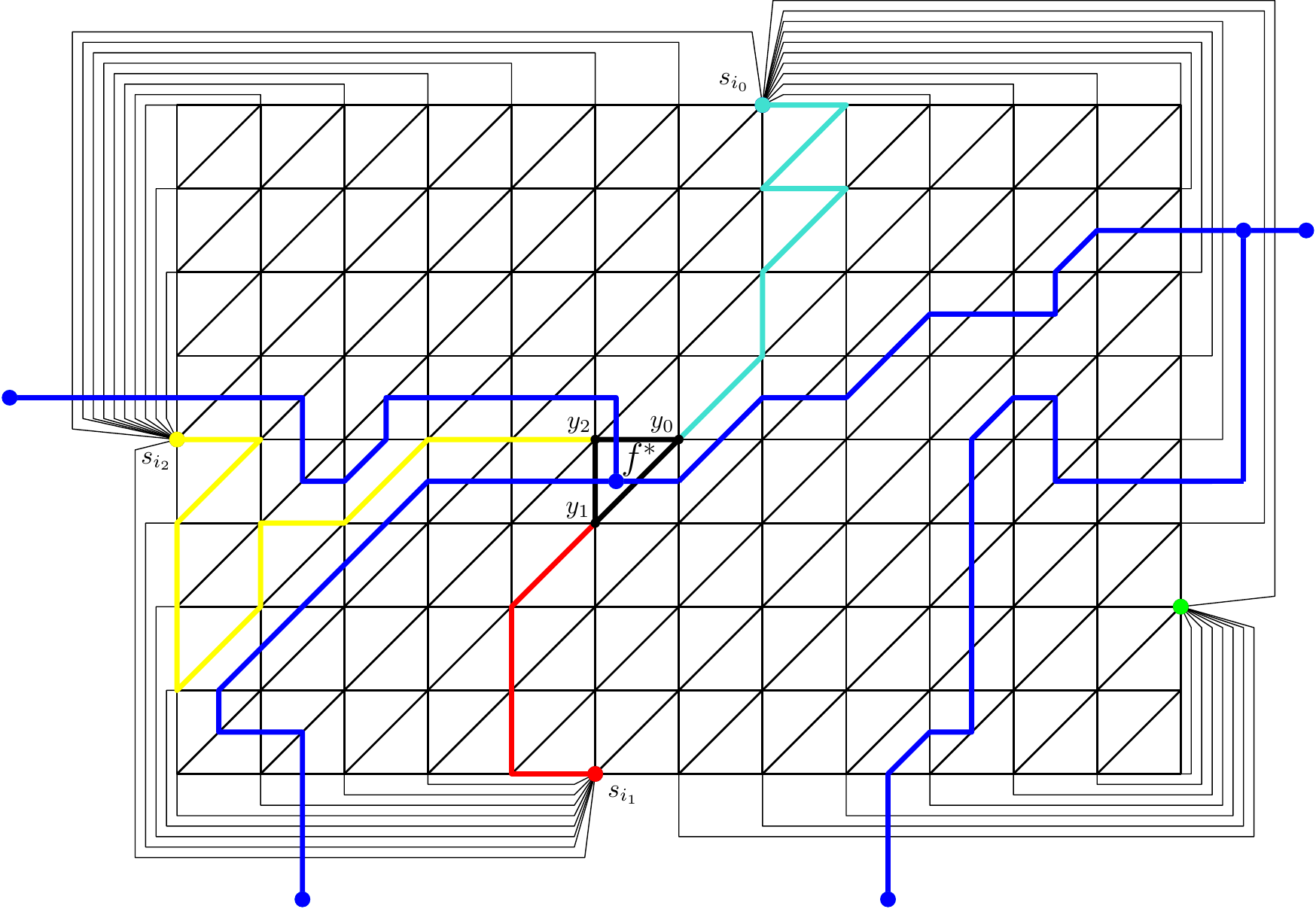}
\includegraphics[width=0.42\textwidth]{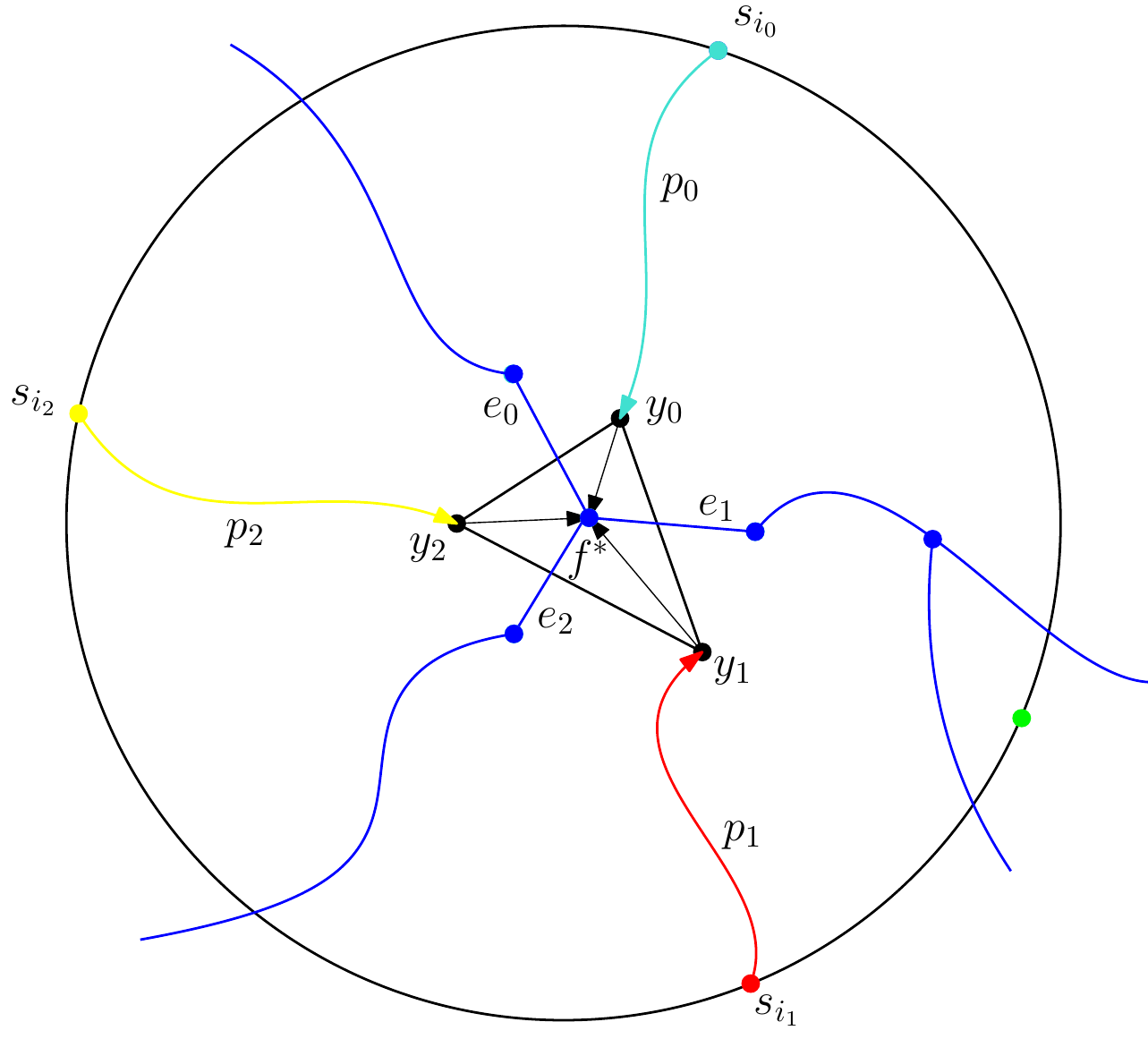}
\end{center}
\caption{Illustration of the setting and proof of Lemma~\ref{lem:location}. Left: A decomposition of $\VD^*$ (shown in blue) by a centroid $f^*$ into three subtrees, and a corresponding partition of $P$ into three regions delimited by the paths $p_i$ (shown in red, yellow, and turquoise).  Right: a schematic illustration of the same scenario.
\label{fig:schematic}
}
\end{figure}

\begin{lemma}\label{lem:location}
Let $s$ be the site such that $v\in \Vor(s)$. If $T^*$ contains all the edges of $\VD^*$ incident to $\Vor(s)$, and 
if $v$ is closer to site $s_{i_j}$ than to sites $s_{i_{j-1}},s_{i_{j+1}}$ (indices are modulo 3), then one of the following is true:
\begin{itemize}
\item $s=s_{i_{j}}$,
\item $v$ is to the right of $p_j$ and all the boundary edges of $\Vor(s)$ are contained in $T^*_j$,
\item $v$ is to the left of $p_j$ and all the boundary edges of $\Vor(s)$ are contained in $T^*_{j+1}$.
\end{itemize}
\end{lemma}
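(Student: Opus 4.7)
The plan is to reduce the lemma to a purely topological statement and then verify it via non-crossing of shortest paths. First, a trivial reduction: if $s = s_{i_j}$ we are in the first case, so assume $s \neq s_{i_j}$. The hypothesis $\dist(s_{i_j}, v) < \dist(s_{i_{j-1}}, v), \dist(s_{i_{j+1}}, v)$ also rules out $s \in \{s_{i_{j-1}}, s_{i_{j+1}}\}$, so $s \notin \{s_{i_0}, s_{i_1}, s_{i_2}\}$. Then $\Vor(s)$ is not incident to $f^*$, and since $\VD^*$ is a tree by Lemma~\ref{lem:tree}, the boundary of $\Vor(s)$ is a connected subtree of $\VD^*$ avoiding $f^*$ and contained in $T^*$ by hypothesis, hence contained in exactly one of $T^*_0, T^*_1, T^*_2$; it remains to identify which one.

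The key structural fact I would establish is that each $p_k$ is contained in $\Vor(s_{i_k})$ (apart from the artificial vertex $v_f$). Because $f^*$ borders $\Vor(s_{i_k})$, we have $y_k \in \Vor(s_{i_k})$; for any vertex $u$ on $p_k$, the subpath equality $d_P(s_{i_k}, u) + d_P(u, y_k) = d_P(s_{i_k}, y_k)$ together with $u \in \Vor(s')$ for $s' \neq s_{i_k}$ would yield $\dist(s', y_k) \leq \dist(s', u) + d_P(u, y_k) < \dist(s_{i_k}, u) + d_P(u, y_k) = \dist(s_{i_k}, y_k)$, contradicting $y_k \in \Vor(s_{i_k})$. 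Since shortest paths are unique, $p_0, p_1, p_2$ do not cross (they can share common suffixes but never cross), so their union together with the three arcs of $h$ between consecutive sites partitions the interior of $h$ into three regions $R_0, R_1, R_2$, where $R_k$ is bounded by $p_k$, $p_{k-1}$, and the arc of $h$ from $s_{i_{k-1}}$ to $s_{i_k}$ (indices mod $3$). Moreover, no edge of $\VD^*$ can cross any $p_j$, because such an edge is dual to a primal edge whose two endpoints lie in distinct cells, whereas both sides of a primal edge of $p_j$ lie in $\Vor(s_{i_j})$. Thus each connected subtree $T^*_k$ is confined to a single region, and because $T^*_k$ contains $e^*_k$---the edge incident to $f^*$ separating $\Vor(s_{i_k})$ and $\Vor(s_{i_{k-1}})$---we conclude $T^*_k \subseteq R_k$.

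To close the argument, I would use that $\Vor(s)$ is connected in $P$: any vertex on a shortest $s$-to-$v$ path lies in $\Vor(s)$ (by the same exchange argument), so $\Vor(s)$ induces a subtree of the shortest-path tree rooted at $s$. Since $\Vor(s)$ is also disjoint from $p_0 \cup p_1 \cup p_2$ (as $s \notin \{s_{i_0}, s_{i_1}, s_{i_2}\}$ and each $p_k \subseteq \Vor(s_{i_k})$), it lies entirely in the unique region $R_{k^*}$ containing $v$. Combined with $T^*_{k^*} \subseteq R_{k^*}$, the unique $T^*_k$ containing the boundary of $\Vor(s)$ must be $T^*_{k^*}$. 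To finish, match regions with the right/left convention of Proposition~\ref{lem:preorder}: with sites in clockwise order on $h$, $v$ is right of $p_j$ precisely when $v \in R_j$ and left of $p_j$ precisely when $v \in R_{j+1}$; if $v$ lies on $p_j$ itself then $v \in \Vor(s_{i_j})$ by the previous paragraph, forcing $s = s_{i_j}$ and returning us to the first case. I expect the main obstacle to be making the tripod-partition topological argument fully rigorous---in particular, verifying that the embedding orientation truly pairs ``right of $p_j$'' with $R_j$ (rather than $R_{j+1}$), handling the artificial vertex $v_f$ carefully when speaking of primal Voronoi membership, and checking the case where two or three of the paths $p_k$ merge strictly before reaching $v_f$.
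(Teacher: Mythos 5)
Your overall structure tracks the paper's proof closely: both arguments rely on the three shortest paths $p_0,p_1,p_2$ partitioning the embedding into regions, on $p_k$ being contained in $\Vor(s_{i_k})$, and on $\VD^*$ edges not crossing the $p_k$. Where the paper treats one boundary curve $p_j \circ rev(p_{j-1})$ at a time, you set up the full tripod partition at once; that is a harmless stylistic difference.

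However, there is a genuine gap in the final step. You assert that ``$v$ is right of $p_j$ precisely when $v \in R_j$, and left of $p_j$ precisely when $v \in R_{j+1}$,'' and from this conclude the edge of $\VD^*$ bounding $\Vor(s)$ lies in $T^*_j$ or $T^*_{j+1}$. This equivalence is false without further argument: right/left of $p_j$ is defined by where the shortest $s_{i_j}$-to-$v$ path $p$ \emph{emanates} relative to $p_j$, which says nothing about where $p$ (and hence $v$) ends up. In principle $p$ could emanate to the right of $p_j$, later cross $p_{j-1}$, and deposit $v$ in $R_{j-1}$; then $\Vor(s)$ would be confined to $R_{j-1}$ and its boundary edges would lie in $T^*_{j-1}$, contradicting your conclusion. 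Nothing in your proof rules this out. What is missing is precisely the step the paper makes explicit: using the hypothesis that $v$ is closer to $s_{i_j}$ than to $s_{i_{j-1}}$ (resp.\ $s_{i_{j+1}}$) to show that $p$ cannot cross $p_{j-1}$ (resp.\ $p_{j+1}$) --- a crossing vertex $w$ on $p_{j-1}$ would lie in $\Vor(s_{i_{j-1}})$, and the subpath inequality $\dist(s_{i_{j-1}},v) \le \dist(s_{i_{j-1}},w) + d_P(w,v) < \dist(s_{i_j},w) + d_P(w,v) = \dist(s_{i_j},v)$ contradicts the hypothesis. You invoke the closeness hypothesis only to exclude $s \in \{s_{i_{j-1}},s_{i_{j+1}}\}$, which is a secondary use; its primary role, needed to certify that $v$ stays in $R_j \cup R_{j+1}$, is absent. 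Once you add that non-crossing argument, the rest of your reduction goes through and indeed coincides with the paper's proof.
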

\begin{proof}
In the following, let $rev(q)$ denote the reverse of a path $q$. See Figure~\ref{fig:schematic} for an illustration of  the proof.

Let $p$ be the shortest path from $s_{i_{j}}$ to $v$. If $p$ is a subpath of $p_{j}$, then $s=s_{i_j}$.
Assume that $p$ emanates right of $p_j$ (the other case is symmetric).
First observe that the path consisting of the concatenation $p_j \circ rev(p_{j-1})$ intersects $\VD^*$ only at $f^*$. This is because, apart from the artificial arc $y_jv_f$, each shortest path $p_j$ is entirely contained in the Voronoi cell of $s_j$. Therefore, none of the subtrees $T^*_{j'}$ contains an edge dual to $p_j \circ rev(p_{j-1})$. Since the path $p_j \circ rev(p_{j-1})$ starts on $h$, ends on $h$ and contains no other vertices of $h$, it partitions the embedding into two subgraphs, one to the right of $p_j \circ rev(p_{j-1})$, and the other to its left. Since $e^*_j$ is the only edge of $T^*$ that emanates right of $p_j \circ rev(p_{j-1})$, the only edges of $T^*$ in the right subgraph are those of $T^*_j$. 

Next observe that $p$ does not cross $p_j$ (since shortest paths from the same source do not cross), and does not cross $p_{j-1}$ (since $v$ is closer to $s_{i_j}$ than to $s_{i_{j-1}}$). Since we assumed $p$ emanates right of $p_j$, the only edges of $T^*$ whose duals belong to $p$ are edges of $T^*_j$. Consider the last edge $e^*$ of $p$ that is not strictly in $\Vor(s)$. If $e^*$ does not exist then $p$ consists only of edges of $\Vor(s_{i_{j}})$, so $s=s_{i_{j}}$.  If $e^*$ does exist then it is an incident to $\Vor(s)$. By the statement of the lemma all edges of $\VD^*$ incident to  $\Vor(s)$ are in $T^*$. Therefore, by the discussion above, $e^* \in T^*_j$. We have established that some edge of $\VD^*$ incident to  $\Vor(s)$ is in $T^*_j$. It remains to show that all such edges are in $T^*_j$. The only two Voronoi cells that are partitioned by the path $p_j \circ rev(p_{j-1})$ are $\Vor(s_{i_{j}})$ and $\Vor(s_{i_{j-1}})$. Since $v$ is closer to $s_{i_{j}}$ than to $s_{i_{j-1}}$, $s \neq s_{i_{j-1}}$. Hence either $s=s_{i_{j}}$, or all the edges of $\VD^*$ incident to $\Vor(s)$ are in $T^*_j$.
\end{proof}

We can finally state and analyze the query algorithm.
We have already argued that, to locate the Voronoi cell $\Vor(s)$ to which $v$ belongs, it suffices to show how to find an edge $e^*$ incident to $\Vor(s)$.  
We start with the tree $T^{*}=\VD^*$ which trivially contains all edges of $\VD^*$ incident to $\Vor(s)$.  
We use the notation from Lemma~\ref{lem:location}. Note that we can determine in constant time which of the three sites $s_{i_j}$ is closest to $v$ by explicitly comparing the distances stored in the shortest path trees $T^{*}_{j}$. We use Proposition~\ref{lem:preorder} to determine, in constant time, whether $v$ is right of $p_j$, left of $p_j$, or a node on $p_j$. In the latter case, by Lemma~\ref{lem:location}, we can immediately infer that $v$ is in the Voronoi cell of $s_{i_j}$. In the former two cases we recurse on the appropriate subtree containing all the edges of $\VD^*$ incident to $\Vor(s)$. 
The total time is dominated by the depth of the centroid decomposition, which is $O(\log |S|)$.

\section{The Tradeoff}\label{sec:tradeoff}
In this section we generalize the construction presented in Section~\ref{sec:overall} to yield a smooth tradeoff between space and query-time. In the following, an MSSP data structure refers to Klein's multiple-source shortest paths data structure~\cite{K05}.
From now on we assume that all arc-lengths are non-negative. This can be ensured with a standard transformation
that computes shortest paths from a designated source node in $O(n\log^{2}n)$ time~\cite{KleinMW10}
and then appropriately modifies all lengths to make them non-negative while keeping the same shortest paths.

\subsection{Preprocessing}
The data structure achieving the tradeoff is recursive using Jordan curve separators as described in Section~\ref{sec:overall};
at each recursive level we have a piece $R=(V_{R},E_{R})$, which is decomposed by a Jordan curve separator $C$ into $P=(V_{P},E_{P})$
and $Q=(V_{Q},E_{Q})$, where $C$ is chosen to balance the number of nodes, the number of boundary nodes, or the number of
holes, depending on the remainder modulo 3 of the recursive level. The main difference compared to the oracle of Section~\ref{sec:overall} is that
we do not store an additively weighted Voronoi diagram of $P$ for
each node $u$ in $Q$ (and similarly we do not store a diagram of $Q$ for each node of $P$). Instead,
we use an $r$-division to decrease the number of stored Voronoi
diagrams by a factor of $\sqrt{r}$. Additionally, we stop the decomposition when the number of vertices drops below $r$.
More specifically, for every non-terminal piece $R$ in the recursive decomposition such that $n(R)>r$ 
that is decomposed into $P$ and $Q$ with a Jordan curve separator $C$, we store the following:

\begin{enumerate}
\item \label{item:MSSP} For each hole $h$ of $P$, an MSSP data structure capturing the distances in $P$ from all the boundary nodes of $P$
incident to $h$ to all nodes of $P$. The MSSP data structure is augmented with predecessor and preorder information (see below).
\item \label{item:rdiv} An $r$-division for $Q$, denoted $D_Q$, with $O(1)$ MSSP data structures for each piece of $D_Q$, one for each hole of the piece. All the boundary nodes of $Q$ (in particular, all nodes of $C$) are considered as boundary nodes of $D_Q$ (see below).
\item \label{item:dist} For each boundary node $u$ of $D_Q$, and for each boundary node $v$ of $P$,
the distance $d_G(u,v)$ from $u$ to $v$ in $G$, and also the distance $d_G(v,u)$ from $v$ to $u$ in $G$.
\item \label{item:Vor} For each boundary node $u$ of $D_Q$, and for each hole $h$ of $P$,
an additively weighted Voronoi diagram $\VD(S_h,\weight)$ for $P$, where the set of sites $S_h$ is the set of boundary nodes of $P$ incident to the hole $h$, and the additive weights $\weight$ correspond to the distances in $G$ from $u$ to each site in $S_h$. We enhance each Voronoi diagram with the point location data structure of Theorem~\ref{thm:extended}.
\end{enumerate}

We also store the same information with the roles of $Q$ and $P$ exchanged.
For a terminal piece $R$, i.e. when $n(R)\leq r$, instead of further subdividing $R$ we revert to the oracle
of Fakcharoenphol and Rao~\cite{FR06}, which needs
$O(n(R)\log n(R))$ space, answers a query in $O(\sqrt{n(R)}\log^{2}n(R))$ time, and can be constructed in $O(n(R)\log^{2}n(R))$ time.
We also construct an $r$-division $D_R$ for $R$ together with the MSSP data structures. The boundary nodes of $R$ are considered as boundary nodes of $D_{R}$.
For each boundary node $u\in \partial D_R$, and for each boundary node $v\in R$,
we store the distance $d_{G}(u,v)$ from $u$ to $v$ in $G$ and, for each hole of $R$, an enhanced additively weighted Voronoi diagram
$\VD(S_h,\weight)$ for $R$, where the set of sites $S_h$ is the set of boundary nodes of $R$ incident to the hole $h$, and the additive weights $\weight$ correspond to the distances in $G$ from $u$ to each site in $S_h$.

The MSSP data structure in item~\ref{item:MSSP} is a modification of the standard MSSP of Klein~\cite{K05}, where we change the interface
of the persistent dynamic tree representing the shortest path tree rooted at the boundary nodes of $R$ incident to $h$,
as stated by the following lemma whose proof is in the appendix.

\begin{restatable}{lemma}{mssp}
\label{lem:mssp}
Consider a directed planar embedded graph on $n$ nodes with non-negative arc-lengths, and let $v_{1},v_{2},\ldots,v_{s}$
be the nodes on the boundary of its infinite face, in clockwise order. Then, in $O(n\log n)$ time and space, we can construct
a representation of all shortest path trees $T_{i}$ rooted at $v_{i}$, that allow answering the following queries
in $O(\log n)$ time:
\begin{itemize}
	\item for a vertex $v_{i}$ and a vertex $v \in V$, return the length of the $v_{i}$-to-$v$ path in $T_{i}$.
	\item for a vertex $v_{i}$ and vertices $u,v \in V$, return whether $u$ is an ancestor of $v$ in $T_{i}$.
	\item for a vertex $v_{i}$ and vertices $u,v \in V$, return whether $u$ occurs before $v$ in the preorder traversal of $T_{i}$.
\end{itemize}
\end{restatable}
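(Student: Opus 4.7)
The plan is to build on Klein's MSSP~\cite{K05}, which rotates the source along the boundary of the infinite face and maintains the shortest-path tree $T_i$ through a sequence of pivots, keeping a persistent version of the underlying dynamic tree so that any $T_i$ can be queried later in $O(\log n)$ time. The total work of all pivots is $O(n\log n)$, and the standard path-sum aggregate on the persistent dynamic tree already answers the first query (the length of the $v_i$-to-$v$ path in $T_i$) in $O(\log n)$. Only the ancestor and preorder queries require new bookkeeping.

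My plan for the two new queries is to maintain, in parallel with Klein's structure, a persistent Euler tour of each $T_i$ stored in a persistent balanced BST (for example, a persistent weight-balanced or red-black tree) keyed by tour position. With each node carrying persistent references to the positions of its first and last occurrences in the current tour, the preorder comparison of $u$ and $v$ in $T_i$ reduces to comparing the positions of their first occurrences, and the test ``$u$ is an ancestor of $v$ in $T_i$'' reduces to checking that the first occurrence of $u$ precedes that of $v$ and the last occurrence of $u$ follows that of $v$. Each amounts to a constant number of rank queries in the persistent BST, answerable in $O(\log n)$ time. When Klein's algorithm pivots by swapping a tree edge out for another, a single subtree of $T_i$ is re-parented, which corresponds to cutting one contiguous interval out of the Euler tour and re-inserting it elsewhere, i.e. $O(1)$ splits and concatenations in the persistent BST, each costing $O(\log n)$. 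This additional cost per pivot preserves Klein's $O(n\log n)$ bound for both construction time and space.

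The main obstacle will be arranging the bookkeeping so that, given the pair of edges involved in a pivot, the Euler-tour interval to cut and the position at which to re-insert it can both be located in $O(\log n)$ time. This is handled by piggybacking on information Klein's MSSP already tracks at each pivot, namely the primal edges entering and leaving the tree and hence the child whose subtree is moved and its new parent, together with persistent references from each node to its most recent first and last tour occurrences, which are themselves part of the persistent state and are refreshed only when the node participates in a pivot. Once this integration is verified, the data structure is built in $O(n\log n)$ time and space, and each of the three queries is answered in $O(\log n)$ time, as claimed.
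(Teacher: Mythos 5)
Your approach is genuinely different from the paper's, and there is a concrete gap in the part you yourself flag as ``the main obstacle.''

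\textbf{Different route.} The paper also makes Klein's MSSP persistent, but it uses a persistent worst-case link-cut tree (Sleator--Tarjan with biased search trees, made partially persistent by the node-copying method of Driscoll et al.), not a persistent Euler tour. In particular, the paper never maintains tour positions at all. For the ancestor test it walks up the link-cut structure from $u$ and from $v$, marking visited solid paths to find where the two walks first meet. For the preorder test it finds the LCA $w$ of $u$ and $v$ via the same mechanism, retrieves the tree edges from $w$ toward $u$, toward $v$, and toward $w$'s parent, and then simply reads off the cyclic order of these three edges around $w$ \emph{in the planar embedding of the graph}. This last step is a key simplification: it means the data structure only has to represent the combinatorial tree $T_i$, not any linearization of it, so there is no tour to keep consistent across pivots. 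Your plan buys nothing over this and takes on extra machinery.

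\textbf{The gap.} Your scheme needs, for every graph node $v$ and every version $i$, to locate the BST leaf holding $v$'s (first or last) tour occurrence and then compute its rank. In a partially persistent structure, one cannot hold raw external pointers into the versioned BST: a split or concatenation copies $O(\log n)$ nodes along the rebalancing path, and a copied leaf can be the occurrence of a node that did not ``participate in the pivot'' in the sense you intend (e.g.\ a node whose occurrence sits at the boundary of the spliced interval, or whose occurrence leaf happens to be touched by a rotation). Your phrase ``persistent references \ldots\ refreshed only when the node participates in a pivot'' therefore does not cover all the updates that are actually required, and without a correct access mechanism the rank query cannot even get started. The paper solves exactly this problem with an auxiliary persistent complete binary search tree on the node indices $1,\ldots,n$ whose leaves store pointers into the main persistent structure; every time a pointed-to node is copied, the corresponding path in the access tree is updated persistently, so that for any version one navigates top-down from that version's access pointer in $O(\log n)$ time. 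If you add the same access-tree layer to your Euler-tour BST, and additionally (a) carry parent pointers through the persistence machinery so you can walk a leaf up to the root to sum subtree sizes for the rank query, and (b) ensure that whenever a subtree is spliced back in, its insertion position among the new parent's tour occurrences matches the cyclic order of edges around that parent in the planar embedding (otherwise ``preorder'' would not agree with the embedding-defined preorder that Proposition~\ref{lem:preorder} requires), then your construction should go through. As written, these three points are unresolved and are precisely where the difficulty lies.
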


In the $r$-division in item~\ref{item:rdiv} we extend the set of boundary nodes $\partial D_Q$ of $D_Q$ to also include all
the boundary nodes of $Q$. In more detail, $D_Q$ is obtained from $Q$ by the same recursive decomposition process as the one
used to partition $G$; on every level of the recursive decomposition we choose a Jordan curve separator as to balance
the total number of nodes, boundary nodes, or holes, depending on the remainder of the level modulo 3,
and terminate the recursion when the number of nodes in a piece is $O(r)$ and the number of boundary nodes is $O(\sqrt{r})$.
Every piece of $D_Q$ consists of $O(r)$ nodes and, because the boundary nodes of $Q$ are incident to $O(1)$ holes,
its $O(\sqrt{r})$ boundary nodes are incident to $O(1)$ holes. Because of the boundary nodes inherited from $Q$, the number of pieces in $D_Q$ is not $O(n(Q)/r)$, and $|\partial D_Q|$
is not $O(n(Q)/\sqrt{r})$. We will analyze $|\partial D_Q|$ later.
The MSSP data structures in item~\ref{item:rdiv} stored for every piece of $D_Q$ are the standard structures of Klein.
The distances in item~\ref{item:dist} are stored explicitly.
The point location mechanism used for the Voronoi diagrams in item~\ref{item:Vor} is the one described
in Section~\ref{sec:pointlocation} with the following important modification. Instead of storing the shortest path
trees rooted at every site of the Voronoi diagram explicitly to report distances, preorder numbers, and ancestry relations
in $O(1)$ time, use the MSSP data structure stored in item~\ref{item:MSSP}.
Clearly, with such queries one can implement Proposition~\ref{lem:preorder} in $O(\log(|P|))$ time instead of $O(1)$.

\subsection{Query}

To compute the distance from $u$
to $v$, we traverse the recursive decomposition starting from the piece that corresponds to the
whole initial graph $G$ as in Section~\ref{sec:overall}. Eventually, we reach a piece $R=(V_{R},E_{R})$ such that
$u,v\in V_{R}$ and either $n(R)\leq r$, or $n(R)>r$ and $R$ is decomposed into $P$ and $Q$ with a Jordan curve
separator $C$ such that either $u\in C$, or $v\in C$, or $u$ and $v$ are separated by $C$.

We first consider the case when $n(R)>r$.
If $u\in C$ or $v\in C$ then, because the nodes of $C$ are boundary nodes the
$r$-divisions $D_P$ and $D_Q$, the distance from $u$ to $v$ in $G$ can be extracted from item~\ref{item:dist}.
Otherwise, $u$ and $v$ are separated by $C$, and we assume without loss of generality that $u\in Q$
and $v\in P$. 
Let $Q'$ be the piece of the $r$-division $D_Q$ that contains $u$.
Any path from $u$ to $v$ must visit a boundary node of $Q'$. Thus, we can iterate over the
boundary nodes $u'$ of $Q'$, retrieve $d_{Q'}(u,u')$ (from the MSSP data structure in item~\ref{item:rdiv}),
and then, for each hole $h$ of $P$, use the Voronoi diagram $\VD(S_h,\weight)$ for $P$ (item~\ref{item:Vor})
to find the node $v'\in S_{h}$ that minimizes
$d_{G}(u',v')+d_{P}(v',v)$ (computed from item~\ref{item:dist} and item~\ref{item:MSSP}). The
minimum value of $d_{Q'}(u,u')+d_{G}(u',v')+d_{P}(v',v)$ found during
this computation corresponds to the shortest path from $u$ to $v$.

The remaining possibility is that $n(R)\leq r$. Then the shortest path from $u$ to $v$ either
visits some boundary node of $R$ or not. To check the former case, we proceed similarly as above:
we find the piece $R'$ of the $r$-division $D_R$ that contains $u$, iterate over the
boundary nodes $u'$ of $Q'$, retrieve $d_{R'}(u,u')$, and use the Voronoi diagram
$\VD(S_h,\weight)$ for $R$ to find the node $v'\in S_{h}$ that minimizes
$d_{G}(u',v')+d_{P}(v',v)$. To check the latter case, we query the oracle of
Fakcharoenphol and Rao~\cite{FR06} stored for $R$, and return the minimum
of these two distances.

\subsection{Analysis}

For a piece $R$, we denote by $n(R)$ and $b(R)$ the number of nodes and boundary nodes of $R$, respectively.
We first analyze the query-time. In $O(\log n)$ time we reach the appropriate piece $R$. 
Then, we iterate over $O(\sqrt{r})$ boundary nodes. For each of them, we first spend
$O(\log r)$ time to retrieve the distance from $u$ to $u'$. Then, we need $O(\log (b(P)) \log (n(P)))$
time to query the Voronoi diagram. If $n(R)\leq r$, this changes into $O(\log(b(R))\log (n(R)))$
and additional $O(\sqrt{n(R)}\log^{2}(n(R)))=O(\sqrt{r}\log^{2}r)$ time for the oracle of
Fakcharoenphol and Rao~\cite{FR06}. Thus, the total query-time is $O(\sqrt{r}\log^{2}n)$. 

We bound the space required by the data structure for a piece $R$ which is divided into pieces $P$ and $Q$.
Each MSSP data structure in item~\ref{item:MSSP} requires $O(n(P)\log(n(P)))$ space, and there are $O(1)$ of them.
Representing the $r$-division $D_Q$ and the MSSP data structures for all the pieces in item~\ref{item:rdiv} can be done
within $O(n(Q)\log r)$ space. Then, for every boundary node of $D_Q$ the distances in item~\ref{item:dist}
and the $O(1)$ Voronoi diagrams in item~\ref{item:Vor} can be stored in $O(b(P))$ 
space.
Thus, we need to analyze the total number of boundary nodes of $D$. As we explained above, $|\partial D_Q|$ would be
simply $O(n(Q)/r)$ if not for the additional boundary nodes of $Q$. We claim
that $|\partial D_Q| = O(n(Q)/\sqrt{r}+b(Q))$.

To prove the claim we slightly modify the reasoning used by Klein, Mozes, and Sommer~\cite{KMS13} to bound the total number of boundary nodes in an $r$-division without additional boundary vertices.
They analyzed the same recursive decomposition process of a planar graph $Q$ on $n$ nodes
by separating to balance the number of nodes, boundary nodes, or holes, depending on the remainder
modulo 3 of the current level.\footnote{In~\cite{KMS13} simple cycle separators are used (rather than Jordan curve separators), and thus every piece along the recursion needs to be re-triangulated. The analysis of the number of boundary nodes, however, is the same.}
Let $\mathcal{T}$ be a tree representing this process, and $\hat x$
be the root of $\mathcal{T}$. Every node $x$ of $\mathcal{T}$ corresponds to a piece. For example, the piece corresponding to the root $\hat x$ is all of $Q$. We denote by $n(x)$ and $b(x)$ the number of nodes and boundary nodes, respectively, of the piece corresponding to $x$. 
Define $S_{r}$ to be the set of rootmost nodes $y$ of $\mathcal{T}$
such that $n(y)\leq r$.
\begin{lemma} \label{lem:Sr}
$\sum_{x \in S_r} b(x) = O(n(Q)/\sqrt r + b(Q))$	
\end{lemma}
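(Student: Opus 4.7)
The plan is to decompose each boundary $b(x)$ for $x \in S_r$ into two disjoint parts: $b_{\mathrm{inh}}(x)$, consisting of vertices originally in $b(Q)$ that persist as boundary vertices of $x$, and $b_{\mathrm{sep}}(x)$, consisting of vertices introduced as part of some Jordan curve separator at a proper ancestor of $x$ in $\mathcal{T}$ that remain boundary vertices of $x$. Then $\sum_{x \in S_r} b(x) = \sum_{x \in S_r} b_{\mathrm{inh}}(x) + \sum_{x \in S_r} b_{\mathrm{sep}}(x)$, and I will bound each sum separately via a telescoping argument along $\mathcal{T}$.

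When an internal piece $y$ with children $y_1, y_2$ is split along a Jordan separator $C_y$, each vertex of $b_{\mathrm{inh}}(y)$ either lies on one side of $C_y$, contributing $1$ to $b_{\mathrm{inh}}(y_1) + b_{\mathrm{inh}}(y_2)$, or lies on $C_y$ itself and is duplicated into both children, contributing $2$. This gives the recurrence $b_{\mathrm{inh}}(y_1) + b_{\mathrm{inh}}(y_2) \le b_{\mathrm{inh}}(y) + |C_y|$. An analogous recurrence $b_{\mathrm{sep}}(y_1) + b_{\mathrm{sep}}(y_2) \le b_{\mathrm{sep}}(y) + 3 |C_y|$ holds for $b_{\mathrm{sep}}$, with the extra $2|C_y|$ term accounting for the vertices of $C_y$ that are new to the boundary. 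Telescoping these recurrences along the ancestors in $\mathcal{T}$ of the elements of $S_r$ yields $\sum_{x \in S_r} b_{\mathrm{inh}}(x) \le b(Q) + \sum_y |C_y|$ and $\sum_{x \in S_r} b_{\mathrm{sep}}(x) \le 3 \sum_y |C_y|$, where $y$ ranges over the proper ancestors in $\mathcal{T}$ of any element of $S_r$.

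It then remains to show $\sum_y |C_y| = O(n(Q)/\sqrt{r})$. Each Jordan separator of a piece with $n$ vertices has $O(\sqrt{n})$ nodes. By Lemma~\ref{lem:separating}, a piece $y$ at level $\ell$ has $n(y) = O(n(Q)/c_1^{\ell/3})$, and the number of pieces at that level is $O(c_1^{\ell/3})$ since at levels $\equiv 0 \pmod 3$ the total node count is balanced. Thus $\sum_{y\text{ at level }\ell} |C_y| = O(c_1^{\ell/3}) \cdot O(\sqrt{n(Q)/c_1^{\ell/3}}) = O(\sqrt{n(Q) \cdot c_1^{\ell/3}})$, and summing from $\ell = 0$ up to the termination level $\ell^* = \Theta(\log_{c_1}(n(Q)/r))$ gives a geometric series dominated by its last term, $O(\sqrt{n(Q) \cdot n(Q)/r}) = O(n(Q)/\sqrt{r})$. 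Combining all estimates yields $\sum_{x \in S_r} b(x) = O(n(Q)/\sqrt{r} + b(Q))$. The main technical subtlety is that the total node count across pieces at a given level may itself grow due to separator duplications, but since $|C_y| = O(\sqrt{n(y)}) \le O(n(y)/\sqrt{r})$ for internal $y$ (where $n(y) > r$), these duplications contribute only a $1/\sqrt{r}$ fraction per level and are absorbed in the analysis.
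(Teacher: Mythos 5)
Your high-level strategy is sound and close in spirit to the paper's: you split $b(x)$ into inherited and separator-created parts, telescope along $\mathcal{T}$, and reduce the bound to showing that $\sum_y |C_y|$ over internal pieces $y$ is $O(n(Q)/\sqrt{r})$. That last quantity is essentially the quantity $L(\hat x, S_r)$ in the paper's proof (the number of duplications), which the paper bounds by citing Lemma 8 of Klein--Mozes--Sommer rather than re-proving it; your telescoping recurrences, the $+b(Q)$ bookkeeping, and the decision to prove the separator bound directly are all reasonable.

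The gap is in your proof that $\sum_y |C_y| = O(n(Q)/\sqrt{r})$. You assert that the number of pieces at level $\ell$ is $O(c_1^{\ell/3})$ ``since at levels $\equiv 0 \pmod 3$ the total node count is balanced.'' This does not follow. Lemma~\ref{lem:separating} gives only an \emph{upper} bound $n(y)=O(n(Q)/c_1^{\ell/3})$ on the size of any single piece at level $\ell$. Combined with the fact that the pieces at a given level cover every edge of $Q$, this yields a \emph{lower} bound of $\Omega(c_1^{\ell/3})$ on the number of pieces, not an upper bound. The recursion tree is binary, so there can be as many as $2^\ell$ pieces at level $\ell$; the $c_1$ arising from a $2/3$-balance is around $3/2$, so $2^\ell$ dwarfs $c_1^{\ell/3}$. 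Your per-level estimate $O(c_1^{\ell/3})\cdot O(\sqrt{n(Q)/c_1^{\ell/3}})$ therefore undercounts, and the geometric sum does not close as written. Your concluding remark that the duplications ``contribute only a $1/\sqrt{r}$ fraction per level and are absorbed'' actually produces a bound of the form $O((n(Q)/\sqrt{r})\log(n(Q)/r))$ once you sum over the $\Theta(\log(n(Q)/r))$ levels --- an extra logarithmic factor that the stated lemma does not have. To eliminate it one needs a genuinely different argument, e.g., a charging scheme that distributes each $|C_y|$ to the $n(y)$ vertices of $y$ and uses the geometric decrease of $n(y_\ell)$ along any root-to-leaf chain (together with $n(y_\ell)>r$ for internal $y_\ell$) to make each vertex's total charge a geometric series summing to $O(1/\sqrt{r})$; this is essentially the content of Lemma~8 in~\cite{KMS13}, which the paper simply invokes.
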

\begin{proof}
For a node $x$ of $\mathcal{T}$
and a set $S$ of descendants of $x$ such that no node of $S$ is an ancestor of any other, define
$L(x,S):=-n(x)+\sum_{y\in S}n(y)$. 
Essentially, $L(x,S)$ counts the number of new boundary nodes with multiplicities created when replacing $x$ by all pieces in $S$.
Lemma 8 in~\cite{KMS13} states that $L(\hat x,S_{r}) = O(n/\sqrt{r})$. I.e., the number of new boundary nodes (with multiplicities) created when replacing the single piece $Q$ by the pieces in $S_r$ is $O(n(Q)/\sqrt{r})$.
We  assume that each node of $Q$ has constant degree (this can be guaranteed with
a standard transformation). Thus, each boundary vertex of $Q$ appears in a constant number of pieces in $S_r$.
Since the number of boundary vertices in $Q$ is $b(Q)$,
 the lemma follows.
\end{proof}

Let $S'_{r}(x)$ be the set of rootmost descendants $y$ of $x$ such that $b(y)\leq c'\sqrt{r}$, where $c'$ is a fixed known constant.
The $r$-division found by the recursive decomposition
process is $S'_{r}=\bigcup_{x\in S_{r}} S'_{r}(x)$. Indeed, 
each piece $x$ in $S'_r$
has $n(x)\leq r$, $b(x) \leq c'\sqrt{r}$, and $O(1)$ holes. This is true by definition of $S'_r$, even though, instead of starting with a graph with no boundary nodes, we start with a graph
containing $b(Q)$ boundary nodes incident to $O(1)$ holes.

The following claim is proved in  
Lemma 9 of~\cite{KMS13}.
\begin{lemma}[Lemma 9 of~\cite{KMS13}] \label{lem:Srt}
$|S'_r(x)| \leq \max\{1, \frac{40b(x)}{c'\sqrt{r}}\}$	
\end{lemma}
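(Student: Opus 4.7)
The plan is to invoke Lemma~9 of~\cite{KMS13} essentially verbatim, after checking that the present setting matches theirs. The only difference is that here the root piece $x$ already contains the boundary vertices it inherits from $Q$, rather than starting from a graph with no boundary nodes; but since the recursive decomposition of $Q$ below $x$ is exactly the one analysed in~\cite{KMS13} (Jordan curve separators, cycling through balancing nodes / boundary nodes / holes modulo~3), their argument applies unchanged to pieces rooted at $x$.

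First I would dispose of the trivial case: if $b(x)\le c'\sqrt{r}$, then $x$ itself already satisfies the stopping condition, so $S'_r(x)=\{x\}$ and the bound $\max\{1,40b(x)/(c'\sqrt{r})\}$ holds with the $1$ term. Otherwise, I would follow the decomposition only down to pieces with $b(\cdot)\le c'\sqrt{r}$ and count the leaves of this truncated subtree.

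For the non-trivial case, the heart of the argument in~\cite{KMS13} is a potential argument based on the boundary counts. Every third level is a boundary-balancing level, so each child inherits at most $2/3$ of its parent's boundary vertices plus an additive $O(\sqrt{n(\cdot)})=O(\sqrt{r})$ contribution from the new Jordan curve separator (using that every descendant $y$ of $x\in S_r$ inherits $n(y)\le n(x)\le r$). Choosing $c'$ sufficiently large so that this additive term is at most, say, a $1/12$ fraction of the parent's boundary as long as that boundary exceeds $c'\sqrt{r}$, each boundary-balancing level strictly decreases the boundary count by a constant factor. Charging the total boundary across the leaves of the truncated subtree against $b(x)$ (where the node-balancing and hole-balancing levels in between only split pieces but are amortised against the boundary-balancing progress through the standard three-level block accounting of~\cite{KMS13}) yields the bound $|S'_r(x)|\le 40b(x)/(c'\sqrt{r})$.

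The main obstacle --- and the reason we rely on the existing proof --- is the amortised accounting across the two non-boundary-balancing level types, which on their own can double the number of active pieces without decreasing their boundary counts. The technical content of Lemma~9 of~\cite{KMS13} is precisely the choice of potential and the per-three-level analysis needed to absorb that blow-up into the $b(x)/\sqrt{r}$ bound with the explicit constant~$40$, and I would reuse their calculation directly.
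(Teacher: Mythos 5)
Your proposal takes exactly the same route as the paper: the paper gives no proof of this statement and simply cites it as Lemma~9 of~\cite{KMS13}, and you likewise invoke that lemma, adding only a sanity check that the present setting (a root piece $x$ that already carries inherited boundary vertices) is still within its scope and a sketch of its internal potential argument. This is correct; the verification that the lemma is applicable even though $x$ may already have boundary vertices is a reasonable thing to flag, since the bound depends only on $b(x)$ and the decomposition below $x$, not on how $x$ acquired its boundary.
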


\begin{corollary}
$|\partial D_Q| = O(n(Q)/\sqrt r + b(Q))$	
\end{corollary}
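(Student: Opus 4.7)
The plan is to bound $|\partial D_Q|$ by the multiset sum $\sum_{y \in S'_r} b(y)$ (which over-counts boundary vertices that belong to several pieces, and hence upper-bounds the set size $|\partial D_Q|$), and then to evaluate this sum by grouping the pieces of $S'_r$ under the ancestor $x \in S_r$ from which they descend:
\[
\sum_{y \in S'_r} b(y) \;=\; \sum_{x \in S_r} \sum_{y \in S'_r(x)} b(y).
\]

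I would then analyze the inner sum in two cases. First, if $b(x) \leq c'\sqrt{r}$, then $x$ itself already satisfies the termination condition defining $S'_r$, so $S'_r(x) = \{x\}$ and the inner sum is exactly $b(x)$. Second, if $b(x) > c'\sqrt{r}$, then Lemma~\ref{lem:Srt} yields $|S'_r(x)| \leq 40\, b(x)/(c'\sqrt{r})$; combined with the fact that every $y \in S'_r(x)$ has $b(y) \leq c'\sqrt{r}$ by the definition of $S'_r$, the inner sum is at most $c'\sqrt{r}\cdot |S'_r(x)| \leq 40\, b(x)$. In both cases the inner sum is $O(b(x))$, so applying Lemma~\ref{lem:Sr} yields
\[
|\partial D_Q| \;\leq\; \sum_{y \in S'_r} b(y) \;=\; O\!\Big(\sum_{x \in S_r} b(x)\Big) \;=\; O\!\Big(\tfrac{n(Q)}{\sqrt{r}} + b(Q)\Big),
\]
which is the desired bound.

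The main obstacle to avoid is a tempting but less fruitful alternative: writing $|S'_r| \leq \sum_{x \in S_r}\bigl(1 + 40\, b(x)/(c'\sqrt{r})\bigr)$ via $\max(1,z) \leq 1 + z$ and then multiplying through by $c'\sqrt{r}$. Doing so produces a stray term $c'\sqrt{r}\cdot |S_r|$ that seems to require a separate bound of the form $|S_r| = O(n(Q)/r)$. Since the decomposition only enforces node-balance on every third level (balancing boundary counts and hole counts on the others), a piece $x \in S_r$ can have $n(x)$ substantially smaller than $\Omega(r)$, which makes a direct bound on $|S_r|$ delicate. The two-case charging above sidesteps $|S_r|$ entirely by attributing the contribution of each group $S'_r(x)$ directly to $b(x)$, regardless of whether $x$ is further subdivided.
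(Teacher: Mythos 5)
Your proof is correct, and it takes a genuinely different (and arguably cleaner) route than the paper's. The paper bounds $\sum_{x\in S'_r}b(x)\leq c'\sqrt{r}\,|S'_r|$ uniformly, then expands $|S'_r|=\sum_{x\in S_r}|S'_r(x)|\leq\sum_{x\in S_r}\bigl(1+\tfrac{40\,b(x)}{c'\sqrt r}\bigr)$; multiplying back by $c'\sqrt r$ leaves a stray $c'\sqrt{r}\,|S_r|$ term, which the paper discharges by asserting $|S_r|=O(n(Q)/r)$ as a "fact" (with no proof and no explicit citation). You instead push the bound through the double sum $\sum_{x\in S_r}\sum_{y\in S'_r(x)}b(y)$ and handle the group rooted at $x$ directly: when $b(x)\leq c'\sqrt r$ you keep $b(x)$ rather than rounding it up to $c'\sqrt r$, and when $b(x)>c'\sqrt r$ the $\max$ in Lemma~\ref{lem:Srt} resolves to $40b(x)/(c'\sqrt r)$ and the $c'\sqrt r$ per-piece bound cancels. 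Either way the inner sum is at most $40\,b(x)$, so Lemma~\ref{lem:Sr} finishes. What this buys you is that the argument becomes self-contained — it uses only Lemmas~\ref{lem:Sr} and~\ref{lem:Srt} and never needs a separate count of $|S_r|$, which, as you note, is not immediate here because only every third separator level balances node counts and because $Q$ enters the recursion already carrying $b(Q)$ boundary nodes. The paper's route is fine if one accepts $|S_r|=O(n(Q)/r)$, but your charging removes that dependency at no cost.
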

\begin{proof}
	$$ |\partial D_Q| \leq \sum_{x\in S'_r}b(x) \leq c'\sqrt{r} \sum_{x \in S_r} |S'_r(x)|  \leq c'\sqrt{r} \left( |S_r| + \frac{40}{c'\sqrt r}\sum_{s\in S_r}b(x) \right) = O(n(Q)/\sqrt r + b(Q)).$$
Here, the first inequality follows by definition of $D_Q$ and $S'_r$. The second inequality follows by definition of $S'_r(x)$ and by the fact that for any $x \in S'_r$, $b(x)\leq c'\sqrt r$. The third inequality follows by Lemma~\ref{lem:Srt}. The last inequality follows from the fact that $|S_r| = O(n(Q)/r)$, and from Lemma~\ref{lem:Sr}.
\end{proof}

We have shown that, starting the recursive decomposition of $Q$ with $b(Q)$ boundary nodes
incident to $O(1)$ holes, we obtain an $r$-division $D$ consisting of pieces containing
$O(r)$ nodes and $O(\sqrt{r})$ boundary nodes incident to $O(1)$ holes, and $O(n(Q)/\sqrt{r}+b(Q))$ boundary
nodes overall. Consequently, the space required by the data structure for a piece $R$ with $n(R)>r$ that is separated into pieces $P$ and $Q$ is
$O(n(P)\log(n(P)+n(Q)\log r + n(Q)/\sqrt{r}+b(Q))b(P))$, plus a symmetric term with the roles of $P$ and $Q$ exchanged.
If $n(R)\leq r$ the space is $O(n(R)\log(n(R))+(n(R)/\sqrt{r}+b(R))b(R))$.
Overall, $O(n(R)\log(n(R))$ sums up $O(n\log n\log(n/r))$. On the $\ell$-th level of the decomposition,
$O(n(P)/\sqrt{r}\cdot b(Q)+n(Q)/\sqrt{r}\cdot b(P))$ sums up to $O(n^{1.5}/(\sqrt{r}\cdot c^{\ell}))$, so $O(n^{1.5}/\sqrt{r})$ over all levels.
$O(b(P)\cdot b(Q))$ can be bounded by $O(b(Q)\cdot \sqrt{n})$, so we only need to bound
the total number of boundary nodes in all pieces of the recursive decomposition
of the whole graph. For the terminal pieces $R$, it directly follows from
Lemma~\ref{lem:Sr} (with $Q$ being the entire graph $G$) that the total number of boundary nodes
is $O(n/\sqrt{r})$, but we also need to analyse the non-terminal pieces.
Because the size of a piece decreases by a constant factor after
at most three steps of the recursive decomposition process, it suffices to bound only the total number of boundary nodes for pieces in the sets $S_{r_i}$, for $r_i = r\cdot 2^i$, $i =0,1,\ldots,\log(n/r)$. By applying Lemma~\ref{lem:Sr}, with $r=r_i$ and $Q=G$ we get that the total number of boundary nodes for pieces in $S_{r_i}$ is $O(n/\sqrt{r_i})$, which sums up to $O(n/\sqrt r)$ over all $i$. 
Thus, the sum of $O(b(P)\cdot b(Q))$
over all non-terminal pieces $O(n^{1.5}/\sqrt{r})$.
For all terminal pieces $R$, $O(n(R)/\sqrt{r}\cdot b(R))$ adds up to $O(n^{1.5}/\sqrt{r})$. $O(b(R)\cdot b(R))$
can be bounded by $O(b(R)\cdot \sqrt{n})$, which we have already shown to be $O(n^{1.5}/\sqrt{r})$ overall.
Thus, the total space is $O(n^{1.5}/\sqrt{r}+n\log n\log(n/r))$.

The preprocessing time can be analyzed similarly as in Section~\ref{sec:overall}, except that now we need
to compute only $O(n(P)/\sqrt{r}+b(P))$ Voronoi diagrams for $P$, each in $\tilde O(n(P))$ time.
As shown above, the overall number of boundary nodes is $O(n/\sqrt{r})$, so this is
$\tilde O(n^{2}/\sqrt{r})$ total time.
Additionally, we need to compute
the distance between pairs of vertices of $P$ in $G$ (item~\ref{item:dist}). One of these vertices is always a boundary node of
the $r$-division, so overall we need $\tilde O(n/\sqrt{r})$ single-source shortest paths computations in $G$,
which takes $\tilde O(n^{2}/\sqrt{r})$ total time.
Additionally, we need to construct the oracles when $n(R)\leq r$
in $O(n\log^{2}r)$ total time. Thus, the total construction time is $\tilde O(n^{2}/\sqrt{r})$ overall.

\subsection{Improved query-time}
The final step in this section is to replace $\log^{2} n$ with $\log n \cdot \log r$ in the query-time. This is done by
observing that the augmented MSSP data structure takes linear space, but for smaller values of $r$ we can actually
afford to store more data. In the appendix we show the following lemma.

\begin{restatable}{lemma}{mssp2}
\label{lem:mssp2}
For any $r\in [1,n]$, the representation from Lemma~\ref{lem:mssp} can be modified to allow answering queries
in $O(\log r)$ time in $O(s\cdot n/\sqrt{r}+n\log r)$ space after $O(s\cdot n/\sqrt{r}\log r+n\log n)$ time preprocessing.
\end{restatable}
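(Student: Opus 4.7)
The plan is to augment the representation of Lemma~\ref{lem:mssp} by precomputing snapshots of the shortest path trees at a regularly spaced subset of the sources, trading the allowed $O(sn/\sqrt r)$ additional space for a reduction of the query time from $O(\log n)$ to $O(\log r)$. I would mark $\Theta(s/\sqrt r)$ snapshot sources, placed at every $\lceil\sqrt r\rceil$-th boundary position. For each snapshot source $v_j$ I would extract $T_j$ from Klein's persistent structure and store, in $O(n)$ space, a representation supporting $O(1)$-time answers to the three queries of Lemma~\ref{lem:mssp}: an array of weighted depths $d_{T_j}(v_j,\cdot)$ for path-length queries and preorder numbers combined with a linear-time LCA structure~\cite{BenderLCA} for the ancestry and preorder queries.

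For a query at an arbitrary source $v_i$, I would identify the closest snapshot source $v_j$ (with version-gap $|i-j|<\sqrt r$) and route the query through Klein's persistent BBST using the snapshot as a constant-time baseline. Since at most $\sqrt r$ source-pivots separate the snapshot version from the queried version, an auxiliary finger-style indexing installed at each snapshot -- contributing $O(n\log r)$ space in aggregate -- can be used so that navigating to version $i$ and performing the augmented BBST search needed for the query together cost $O(\log r)$ rather than $O(\log n)$. The finger layer carries the same aggregates (weighted depth and preorder intervals) that Klein's BBST already carries, so the three queries reduce unchanged.

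For preprocessing, Klein's baseline structure of Lemma~\ref{lem:mssp} is built in $O(n\log n)$ time. Each of the $s/\sqrt r$ snapshots is then materialized by walking through the finger-indexed persistent structure and extracting the preorder, depth and LCA data in $O(n\log r)$ time, for a total of $O((sn/\sqrt r)\log r)$. Adding these gives the claimed $O((sn/\sqrt r)\log r+n\log n)$ preprocessing bound, and the space sums to $O(sn/\sqrt r+n\log r)$ as required.

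The main obstacle will be specifying the finger layer on the persistent BBST so that, starting from any snapshot source, one can reach the version at any of the $\sqrt r$ subsequent sources -- and execute the augmented BBST search there -- in a combined $O(\log r)$ time, while keeping the total space additive $O(n\log r)$ rather than $O(n\log n)$. This amounts to providing a persistent analog of finger search on Klein's dynamic tree representation, which I expect can be obtained by installing $O(\log r)$ shortcut pointers into the persistent BBST at each snapshot and arguing that the $\sqrt r$ intervening pivot updates contribute only $O(\log r)$ extra hops along these shortcuts.
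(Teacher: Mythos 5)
Your approach is genuinely different from the paper's, but it has a gap that I don't think can be patched as described. The paper does \emph{not} subsample the sources; instead it decomposes the \emph{graph} via an $r$-division. For every source $v_i$ it stores a static ``macro'' tree $T'_i$ on $v_i$ and the $O(n/\sqrt r)$ boundary nodes of the $r$-division (giving the $O(sn/\sqrt r)$ term), equipped with explicit preorder/LCA data for $O(1)$ queries. The ``micro'' structures are persistent link-cut forests restricted to a single piece of size $O(r)$, so any traversal within a piece takes $O(\log r)$ time; a query walks from $v$ to a boundary node in $O(\log r)$ time via the micro structure and then finishes in $O(1)$ via the macro tree. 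The $O(\log r)$ query bound comes purely from the pieces being small.

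Your plan keeps Klein's global persistent link-cut tree and instead snapshots every $\lceil\sqrt r\rceil$-th source, hoping a finger-search layer lets you navigate from a snapshot to any nearby version in $O(\log r)$ time. The central claim — that ``at most $\sqrt r$ source-pivots separate the snapshot version from the queried version'' — conflates the number of \emph{sources} between two versions with the number of \emph{edge-pivot operations} between them. Klein's MSSP only guarantees that the total number of pivots over all $s$ sources is $O(n)$; between two consecutive sources there can be $\Theta(n)$ pivots, and even amortized there are $\Theta(n/s)$ per source, which need not be $O(\sqrt r)$. So the version gap you would have to bridge is not bounded by $\sqrt r$, and the finger layer you sketch has no budget to navigate it in $O(\log r)$ time. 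Moreover, even if the pivot count \emph{were} $O(\sqrt r)$, it is not at all clear that a persistent link-cut tree admits a finger-search analogue with the property you need: the cost of reaching a node $v$ in $T_i$ is governed by the depth of the solid-path decomposition, which can change arbitrarily under a single pivot, so ``version distance'' does not control ``search distance'' in the BBST. You flagged this as the main obstacle yourself; it is not a technicality but the crux, and absent a concrete construction the argument does not go through. If you want to salvage a source-sampling approach you would at minimum need to place snapshots by cumulative pivot count rather than by source index, and still solve the finger-search problem on the persistent link-cut structure — at which point the paper's $r$-division route is considerably simpler and matches all three bounds directly.
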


This decreases the query-time to $\Oh(\sqrt{r}\log n\log r)$ at the expense of increasing the space taken by the
MSSP data structures in item~\ref{item:MSSP} to $\Oh(n(P)/\sqrt{r}\cdot b(P)+n(P)\log r)$. Summing over all levels $\ell$
and including the space used by all other ingredients, this is $\Oh(n^{1.5} / \sqrt{r} + n\log r\log (n/r))$.

\section{Removing the Assumption on Sites}
\label{sec:extension}

We now remove the assumption that all the vertices on the hole $h$ are sites of the Voronoi diagram whose Voronoi cells are non-empty.
Recall that $\VD^*(S,\weight)$ is obtained from $VD_1^*$ by replacing $h^*$ with multiple copies, one copy $h^*_e$ for each edge $e$ of $\VD_0^*(S,\weight)$ incident to $h^*$. Consider the proof of Lemma~\ref{lem:tree}. The argument showing that $\VD^*(S,\weight)$ contains no cycles still holds. However, because now vertices incident to the hole are not necessarily sites, the argument showing connectivity fails, and indeed, $\VD^*(S,\weight)$ might be a forest. We turn the forest $\VD^*$ into a tree $\widehat{\VD}^*$ by identifying certain pairs of copies of $h^*$ as follows.
 Consider the sequence $E_h$ of edges of $\VD^*$ incident to $h^*$, ordered according to their clockwise order on the face $h$. Each pair of consecutive edges $e,e'$ in $E_h$ delimits a subpath $Q$ of the boundary walk of $h$. Note that  $Q$ belongs to a single Voronoi cell of some site $s\in S$ . If $Q$ does not contain $s$ we connect the two copies $h^*_e$ and $h^*_{e'}$ with an artificial edge. We denote the resulting graph by $\widehat{\VD}^*$. See Figure~\ref{fig:generaltree}.
 
\begin{figure}[h]
\begin{center}
\includegraphics[width=0.47\textwidth]{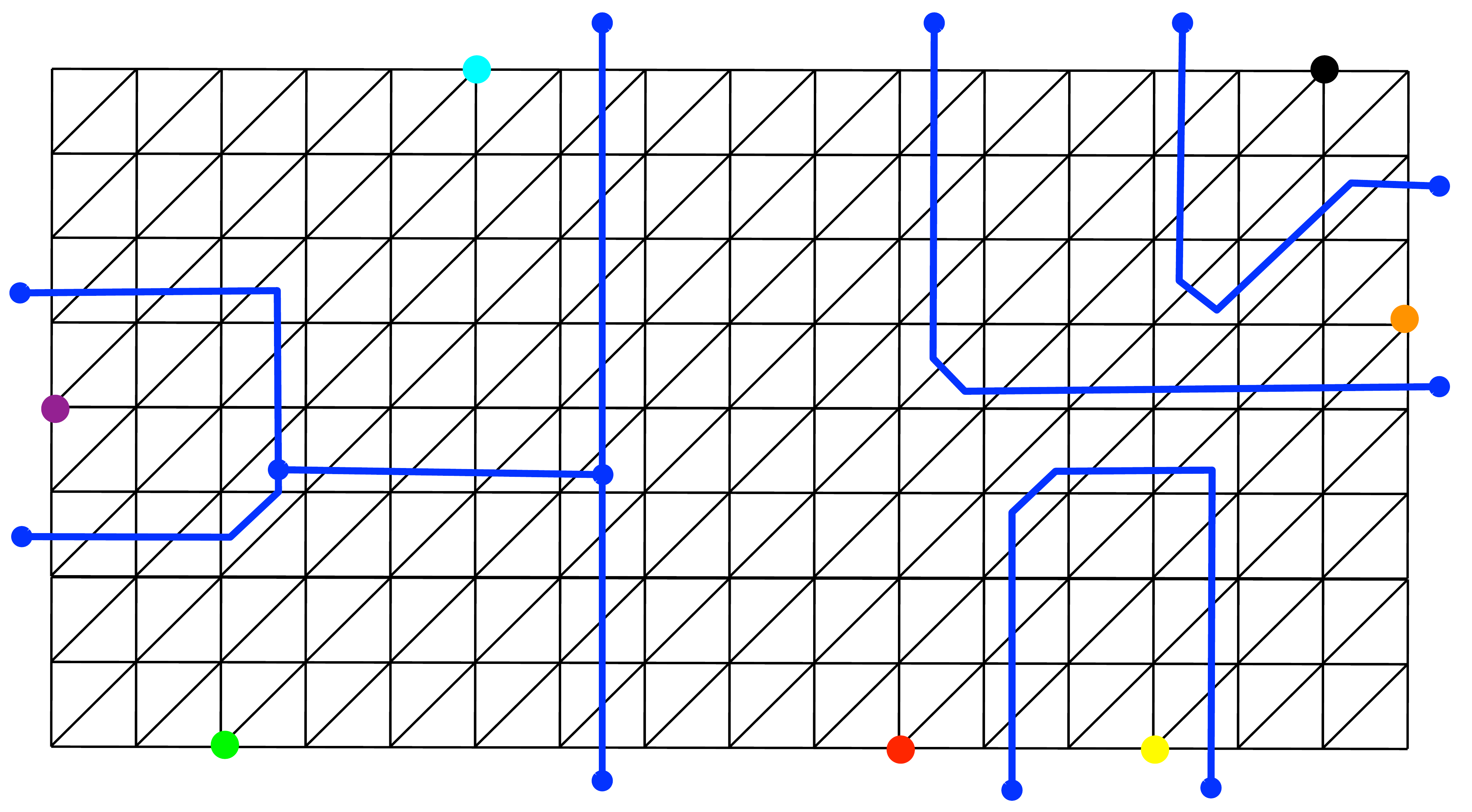}
\includegraphics[width=0.47\textwidth]{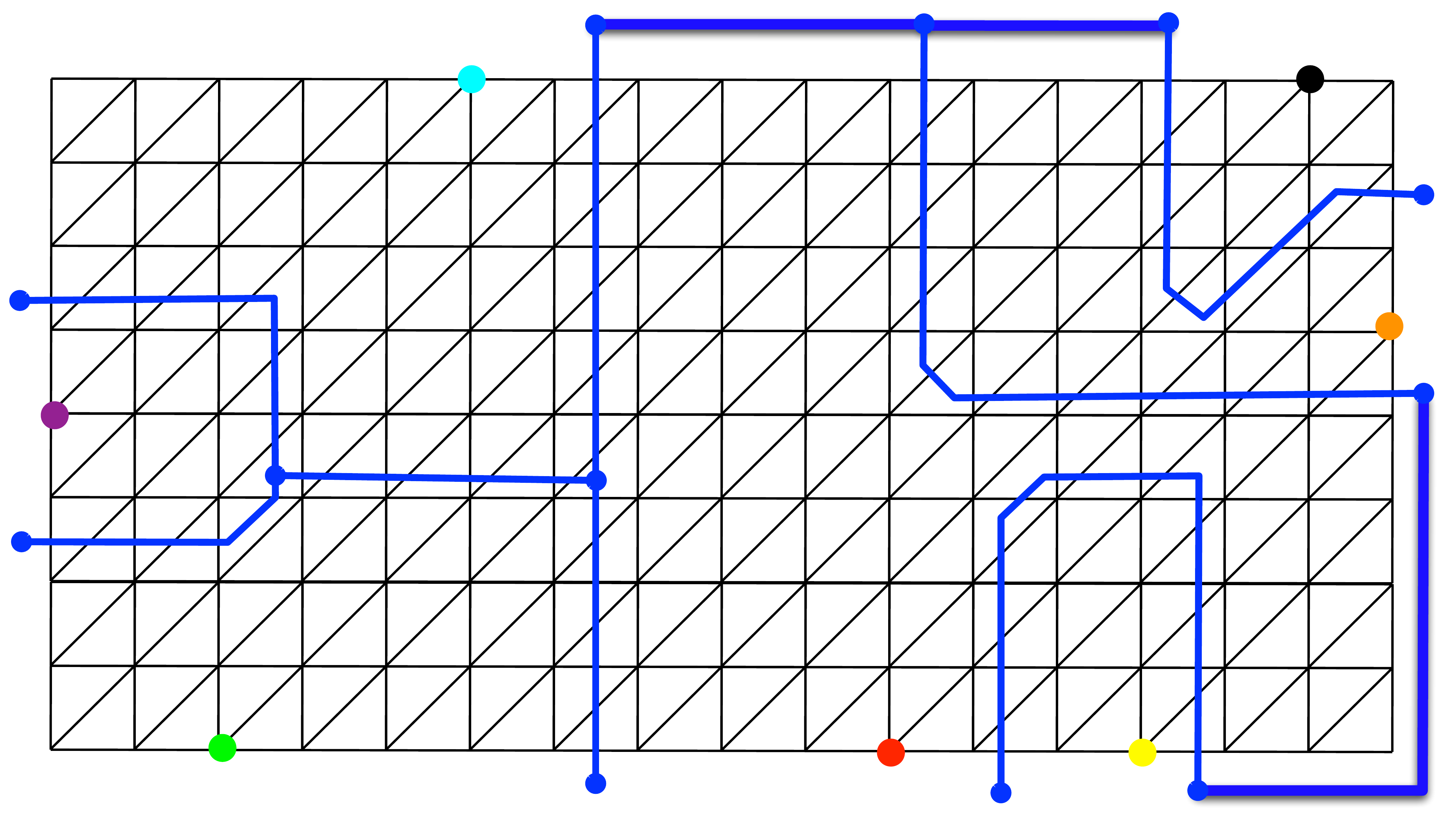}
\end{center}
\caption{Left: a Voronoi diagram $\VD^*$ (blue) forms a forest. Right: the tree $\widehat{\VD}^*$ obtained by adding three artificial edges (thicker blue lines). \label{fig:generaltree}}
\end{figure} 
 \begin{lemma}\label{lem:hattree}
 $\widehat{\VD}^*$ is a tree.	
 \end{lemma}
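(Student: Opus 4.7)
The plan is to prove the lemma by showing that $\widehat{\VD}^*$ is both connected and has exactly $|V(\widehat{\VD}^*)|-1$ edges, which together force it to be a tree. The paper already observes that the acyclicity argument of Lemma~\ref{lem:tree} carries over unchanged (it shows $\VD^*$ is a forest), so the new content is confined to the artificial edges.

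For the edge count, I would first show that even without the assumption, every non-empty cell $\Vor(s)$ contains its own site $s$: if some $v$ were in $\Vor(s)$ while $s$ lay in another cell, then some site $s'\ne s$ would satisfy $\dist(s',s)\le\weight(s)$, and the triangle inequality would give $\dist(s',v)\le\weight(s')+d_P(s',s)+d_P(s,v)\le\dist(s,v)$, contradicting $v\in\Vor(s)$. Since $s$ lies on $h$, it is contained in exactly one of the arcs of $\Vor(s)$ on $h$. Let $r$ be the number of real Voronoi vertices, $k$ the number of copies of $h^*$, $s_{ne}$ the number of non-empty cells, and $t(s)$ the number of arcs of $\Vor(s)$ on $h$. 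Then the number of artificial edges is $m=\sum_{\Vor(s)\ne\emptyset}(t(s)-1)=k-s_{ne}$. Next I would apply Euler's formula to $\VD_1^*$, which is connected (every face is a non-empty cell adjacent to $h^*$) and planar, with $r+1$ vertices, $(3r+k)/2$ edges, and $s_{ne}$ faces: this yields $k=2s_{ne}-r-2$. Substituting gives $|E(\widehat{\VD}^*)|=(3r+k)/2+m=2s_{ne}-3=|V(\widehat{\VD}^*)|-1$.

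For connectivity, it suffices to show that every pair of consecutive copies $h^*_{e_i},h^*_{e_{i+1}}$ of $h^*$ lies in the same component of $\widehat{\VD}^*$, because every other vertex is a real Voronoi vertex sitting on some edge of $\VD^*$ between copies of $h^*$. If the arc $Q_i$ bounded by these two copies does not contain its site, they are joined directly by the artificial edge. Otherwise $Q_i$ contains the site $s=s_{j(i)}$ of its cell, and I would trace the boundary of $\Vor(s)$ in $\widehat{\VD}^*$: this boundary is a closed walk in $\VD_1^*$ that passes through $h^*$ exactly $t(s)$ times, and splitting $h^*$ breaks the walk into $t(s)$ internal paths of $\VD^*$ between the $2t(s)$ copies of $h^*$ on the boundary of $\Vor(s)$. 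Adding the $t(s)-1$ artificial edges corresponding to the arcs of $\Vor(s)$ other than $Q_i$ stitches these internal paths together into a single path whose endpoints are precisely $h^*_{e_i}$ and $h^*_{e_{i+1}}$.

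The main obstacle will be verifying the connectivity argument in the case $s\in Q_i$: one must check carefully that the artificial edges (exactly those for the arcs of $\Vor(s)$ not containing its site) combine with the $t(s)$ internal paths in the correct zigzag pattern around $\Vor(s)$, producing one path from $h^*_{e_{i+1}}$ to $h^*_{e_i}$ rather than, say, a cycle together with a disjoint remainder. This reduces to the observation that $Q_i$ is the \emph{unique} arc of $\Vor(s)$ that does not receive an artificial edge, so ``opening up'' the closed boundary walk of $\Vor(s)$ precisely at $Q_i$ produces a path whose two endpoints are the copies $h^*_{e_i}$ and $h^*_{e_{i+1}}$.
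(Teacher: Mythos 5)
Your proof is correct, and the connectivity half tracks the paper's argument closely (open up the boundary cycle of each cell at the unique arc containing its site, use the shared Voronoi edges to glue adjacent cells, and note that all real Voronoi vertices lie on paths of $\VD^*$ between copies of $h^*$). Where you genuinely diverge is the second half: the paper rules out cycles in $\widehat{\VD}^*$ directly, by observing that any cycle would have to enclose a site yet the artificial edges are placed only across arcs of $h$ containing no site, a contradiction. You instead count: using Euler's formula on $\VD_1^*$ (with $r+1$ vertices, $(3r+k)/2$ edges from the degree sum $3r+\deg(h^*)=3r+k$, and $s_{ne}$ faces) to get $k=2s_{ne}-r-2$, and then showing $|E(\widehat{\VD}^*)|=|V(\widehat{\VD}^*)|-1$, which together with connectivity forces a tree. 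Both routes are valid; yours avoids reasoning about enclosed regions at the cost of a slightly longer counting setup (and the helper observation that each non-empty cell contains its own site, which nails $m=\sum(t(s)-1)=k-s_{ne}$).

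Two small things to tighten if you write this up. First, your inequality chain should end with a strict inequality: if $s\in\Vor(s')$ with $s'\ne s$ then $\dist(s',s)<\weight(s)$, so $\dist(s',v)<\dist(s,v)$; this is where the uniqueness-of-distances assumption enters. Second, you invoke Euler's formula for a \emph{connected} plane graph, so you need $\VD_1^*$ connected; your parenthetical hint is too terse. The clean justification, which the paper itself uses inside the proof of Lemma~\ref{lem:tree}, is that every face of $\VD_1^*$ is bounded by a cycle through $h^*$ (because each cell is simply connected), so every vertex and edge lies on a cycle through $h^*$, and the graph is connected. Spelling that out closes the only real gap in the write-up.
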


\begin{proof}
 
 We show that $\widehat{\VD}^*$ is connected and has no cycles. Consider the Voronoi cell of a site $s \in S$. In $\VD_1^*$ (i.e., before splitting $h^*$) the boundary of this cell is a non-self-crossing cycle $C^*$. Consider the restriction of $C$ to edges incident to $h^*$. Consider two consecutive edges $e,e'$ in the restriction. 
 If $e$ and $e'$ are not consecutive in $C^*$ (i.e., if $e$ and $e'$ do not meet at $h^*$), then they remain connected in $\VD^*$ (i.e., after splitting $h^*$). If $e$ and $e'$ are consecutive on $C^*$, then they are also consecutive in $E_h$, so they become disconnected in $\VD^*$, but get connected again in $\widehat{\VD}^*$. This is true unless the subpath $Q$ of the face $h$ delimited by $e$ and $e'$ contains $s$, but this only happens for one pair of edges in $C^*$. Therefore, since $C^*$ was 2-connected (a cycle) in $\VD_1^*$, it is 1-connected in $\widehat{\VD}^*$. Now, since adjacent Voronoi cells share edges, the boundaries of any two adjacent Voronoi cells are connected. It follows from the fact that the dual graph of $\widehat{\VD}^*$ is connected that the boundaries of all cells are connected after the identification step.
 
Assume that $\widehat{\VD}^*$ contains a cycle $C^*$. Then $C^*$ must also be a cycle in $\VD^*_1$. Since every cycle in $\VD_1^*$ contains $h^*$ and encloses at least one site, $C^*$ contains a copy of $h^*$ and encloses at least one site. Consider a decomposition of $C^*$ into maximal segments between copies of $h^*$. By construction of $\widehat{\VD}^*$, whenever two segments of $C^*$ are connected with an artificial edge, the segment of the boundary of $h$ delimited by these two segments and enclosed by $C^*$ does not contain a site. Since all the sites are on the boundary of $h$, it follows that $C^*$ does not enclose any sites, a contradiction.
\end{proof}  

We next describe how to extend the point location data structure. 
First observe that since $\widehat{\VD}^*$ is a tree, it has a centroid decomposition. 
In $\VD^*$, the copies of $h^*$ are all  leaves. In $\widehat{\VD}^*$ each copy of $h^*$ is incident to at most two artificial edges. Hence the maximum degree in $\widehat{\VD}^*$ is still 3. 
If the centroid of $\widehat{\VD}^*$ is not a copy of $h^*$ then Lemma~\ref{lem:location} holds. We need a version of Lemma~\ref{lem:location} for the case when the centroid is a copy of $h^*$ with degree greater than 1 (i.e., incident to one or two artificial edges). This is in fact a simpler version of Lemma~\ref{lem:location}. The difference between a copy of $h^*$ and a Voronoi vertex $f^*$ is that $f^*$ is a triangular face incident to three specific vertices $y_0,y_1,y_2$, whereas $h^*$ is incident to all vertices of the hole $h$.  Recall that we connect two copies $h^*_e$ and $h^*_{e'}$ if the segment $Q$ of the boundary of $h$ delimited by the edges $e$ and $e'$ belongs to the Voronoi cell of a site $s$ but does not contain $s$. When we add this artificial edge, we associate with $h^*_e$ and with $h^*_{e'}$ an arbitrary primal vertex $y$ on $Q$. 
Thus, each copy $f$ of $h^*$ is associated with at most two primal vertices. 

We describe the case where the centroid of $T^*$ is a copy $\hat h^*$ of $h^*$ with degree 3. 
The case of degree 2 and one associated vertex is similar.
In the case of degree 3, $\hat h^*$ is incident to one edge $e_1 \in E_h$, and to two artificial edges which we denote $e_0$, and $e_2$, so that the counterclockwise order of edges around $\hat h^*$ is $e_0,e_1,e_2$.
Removing $\hat h^*$ breaks $T^*$ into three subtrees. Let $T^*_j$ be the subtree of $T^*$ rooted at the endpoint of $e_j$ that is not $\hat h^*$.  
Recall that, since the degree of $\hat h^*$ is 3, it  has two associated vertices, $y_0, y_1$, where $y_j$ belongs to the subpath of the boundary of $h$ delimited by $e_j$ and $e_{j+1}$.
 Let $s_{i_j}$ be the site such that $y_j$ belongs to $\Vor(s_{i_j})$. Let $p_j$ be the shortest path from $s_{i_j}$ to $y_j$. See Figure~\ref{fig:generalloc}.
 
\begin{figure}[h]
\begin{center}
\includegraphics[width=0.7\textwidth]{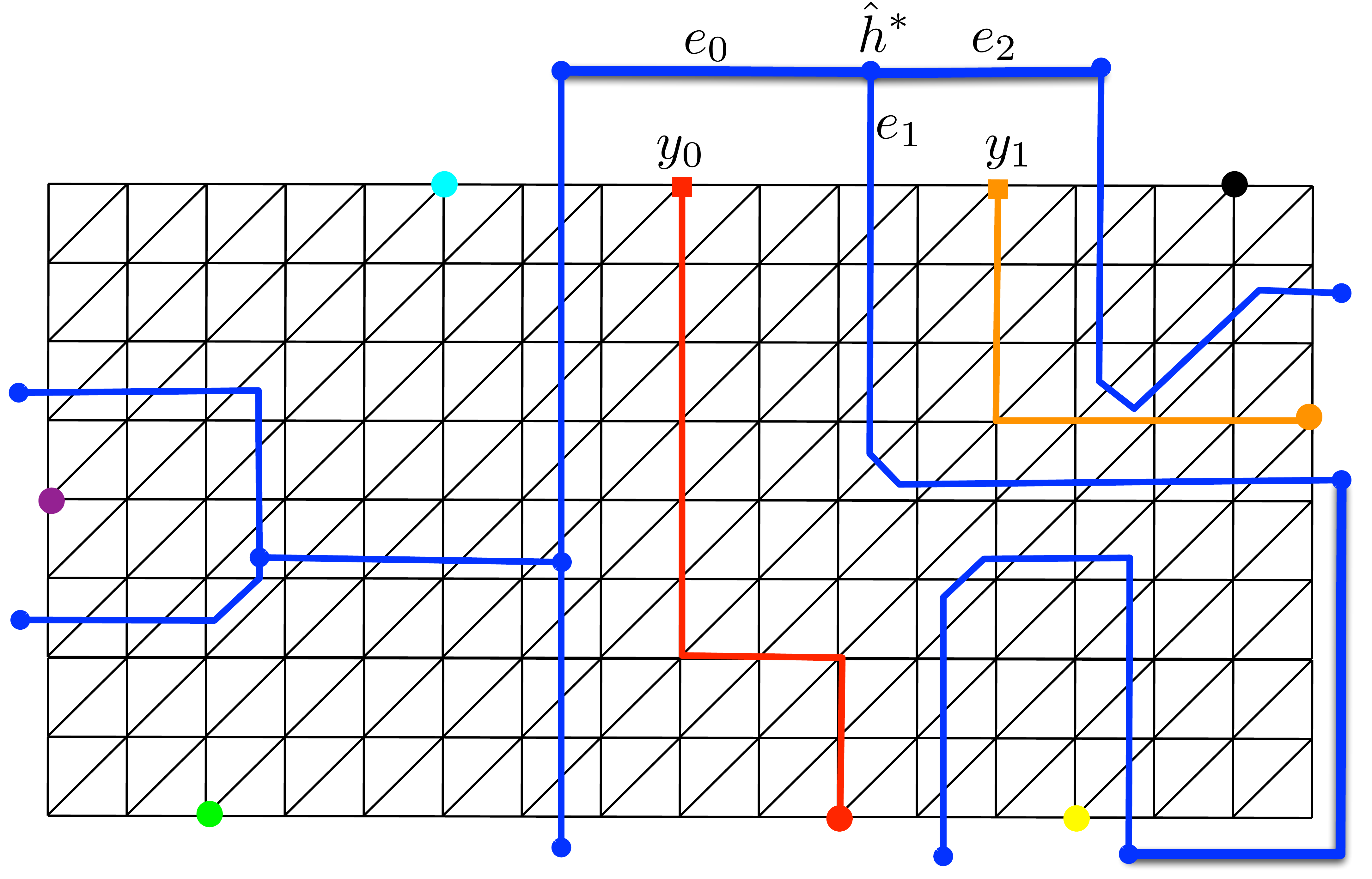}
\end{center}
\caption{Illustration of the case when the centroid is a copy $\hat h^*$ of $h^*$ of degree three. $\hat h^*$ has two incident artificial edges ($e_0,e_2$), and one Voronoi edge ($e_1$). The two vertices $y_0,y_1$ associated with $\hat h^*$ are shown, as well as the shortest paths $p_0$ and $p_1$. \label{fig:generalloc}}
\end{figure}

\begin{lemma}\label{lem:location2}
Let $s$ be the site such that $v\in \Vor(s)$. If $T^*$ contains all the edges of $\widehat{\VD}^*$ incident to $\Vor(s)$, and if $v$ is closer to site $s_{i_j}$ than to site $s_{i_{j+1}}$ (indices are modulo 2), then one of the following is true:
\begin{itemize}
\item $s=s_{i_{j}}$,
\item $v$ is to the left of $p_j$ and all the edges of $\widehat{\VD}^*$ incident to $\Vor(s)$ are contained in $T^*_j$,
\item $v$ is to the right of $p_j$ and all the edges of $\widehat{\VD}^*$ incident to $\Vor(s)$ are contained in $T^*_{j+1}$.	
\end{itemize}
\end{lemma}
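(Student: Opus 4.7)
The plan is to adapt the proof of Lemma~\ref{lem:location}, exploiting two simplifications: a single separating path $p_j$ replaces the concatenation $p_j \circ rev(p_{j-1})$, and only two of the three subtrees around the centroid play a role.

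First, I would establish that $p_j$ topologically separates $T^*_j$ from $T^*_{j+1}$. Because $y_j\in \Vor(s_{i_j})$ and shortest paths are unique, $p_j$ lies entirely in $\Vor(s_{i_j})$, so no edge of $\widehat{\VD}^*$ has its dual on $p_j$. Its endpoints $s_{i_j}$ and $y_j$ both lie on $h$, so together with an arc of $h$ the path $p_j$ forms a Jordan curve that partitions the disk bounded by $h$ into two regions. Using the counterclockwise cyclic order $e_0, e_1, e_2$ around $\hat h^*$ and the fact that $y_j$ lies on the arc of $h$ delimited by $e_j$ and $e_{j+1}$, the edges $e_j$ and $e_{j+1}$ emanate from $\hat h^*$ into opposite regions; hence the subtrees $T^*_j$ and $T^*_{j+1}$ lie on opposite sides of $p_j$.

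Next, I would analyze the shortest path $p$ from $s_{i_j}$ to $v$. If $p$ is a subpath of $p_j$ then $v\in \Vor(s_{i_j})$ and $s=s_{i_j}$. Otherwise $p$ emanates to one side of $p_j$; since shortest paths from the same source do not cross, $p$ cannot later cross $p_j$. Also, $p$ does not cross $p_{j+1}$: a crossing vertex $w$ would satisfy $d(s_{i_{j+1}},w)<d(s_{i_j},w)$ (as $w\in p_{j+1}\subseteq \Vor(s_{i_{j+1}})$), while the standard detour argument, combined with the hypothesis $d(s_{i_j},v)<d(s_{i_{j+1}},v)$, yields the opposite inequality. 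Therefore $p$ remains on one side of $p_j$. Let $e^*$ be the last edge of $p$ whose primal is not strictly inside $\Vor(s)$: if no such edge exists then $p\subseteq \Vor(s_{i_j})$ and $s=s_{i_j}$; otherwise $e^*$ is an edge of $\widehat{\VD}^*$ incident to $\Vor(s)$, and by hypothesis it lies in $T^*$, hence in whichever of $T^*_j, T^*_{j+1}$ matches the side of $p_j$ containing $p$. Finally, since the closeness hypothesis forces $s\ne s_{i_{j+1}}$ and $p_j$ can only split the cells $\Vor(s_{i_j})$ and $\Vor(s_{i_{j+1}})$, the cell $\Vor(s)$ lies entirely on one side of $p_j$, so \emph{all} its incident edges in $\widehat{\VD}^*$ lie in the same subtree.

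The main obstacle I expect is making the topological separation argument fully precise when $p_j$ touches $h$ at intermediate vertices, which is possible here because the simplifying assumption that every vertex of $h$ is a site with non-empty cell has been dropped in this section. This should be handled either by choosing $y_j$ to be the farthest vertex along the arc of $h$ contained in $\Vor(s_{i_j})$, or via a slightly more careful topological bookkeeping using the fact that any intermediate vertex of $p_j$ lying on $h$ must itself belong to $\Vor(s_{i_j})$, so the subarcs of $h$ traversed can be absorbed into the closed Jordan curve without altering which side $e_j$ and $e_{j+1}$ lie on.
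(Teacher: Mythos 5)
Your proposal follows essentially the same route as the paper's proof: you show that the paths $p_0,p_1$ lie inside Voronoi cells and hence do not cross $\widehat{\VD}^*$, use them to partition the plane, and then argue that the shortest path $p$ from $s_{i_j}$ to $v$ remains in the region containing the relevant subtree, so that the last $\widehat{\VD}^*$-edge crossed by $p$ identifies that subtree. The paper is slightly more careful about the partition: it uses $p_0$ and $p_1$ jointly to split the plane into three regions $G_0,G_1,G_2$, one per subtree $T^*_0,T^*_1,T^*_2$, whereas you frame the argument around the single path $p_j$; your phrasing ``hence in whichever of $T^*_j, T^*_{j+1}$ matches the side of $p_j$ containing $p$'' silently skips over the third subtree $T^*_{j+2}$, which lies on the same side of $p_j$ as $T^*_{j+1}$ — it is precisely your non-crossing claim for $p_{j+1}$ that rules it out, so you should make that link explicit rather than treat it as an auxiliary fact. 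Your closing worry about $p_j$ touching the hole $h$ at interior vertices is legitimate (the paper does not discuss it), but, as you note, such vertices must belong to $\Vor(s_{i_j})$ and can be absorbed into the Jordan curve without changing which side of it $e_j$ and $e_{j+1}$ lie on, so it is only a bookkeeping issue.
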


\begin{proof}
Observe that all the edges of $p_j$ belong to $\Vor(s_{i_j})$, while for every $i\in \{0,1,2\}$, the duals of edges of $T^*_i$ have endpoints in two different Voronoi cells. Therefore, the paths $p_j$ do not cross the trees $T^*_i$. 
Since $p_0$ and $p_1$ are paths that start and end on the boundary of $h$ and do not cross each other, they partition $G$ into three subgraphs $\{G_i\}_{i=0}^2$. Let $G_0$ be the subgraph to the left of $p_0$, $G_1$ the subgraph to the right of $p_0$ and to the left pf $p_1$, and $G_2$ the graph to the right of $p_1$. 
It follows from the above that each subtree $T^*_i$ belongs to the subgraph $G_i$.

The remainder of the proof is almost identical to that of Lemma~\ref{lem:location}.
Let $p$ be the shortest path from $s_{i_j}$ to $v$. If $p$ is a subpath of $p_j$ then $s=s_{i_j}$. Otherwise, assume $p$ emanates left of $p_0$ (the other cases are similar). Consider the last edge $e^*$ of $p$ that is not strictly in $\Vor(s)$. If $e^*$ does not exist then $s=s_{i_0}$. If it does exist, then it must be en edge of $T^*_0$. Since the only Voronoi cell partitioned by $p_0$ is that of $s_{i_0}$, either $s = s_{i_0}$, or all edges of $\widehat{\VD}^*$ incident to $\Vor_s$ belong to $T^*_0$.   	
\end{proof}
 
\bibliographystyle{abbrv}

\begin{thebibliography}{10}

\bibitem{Arikati96}
S.~Arikati, D.~Z. Chen, L.~P. Chew, G.~Das, M.~Smid, and C.~D. Zaroliagis.
\newblock Planar spanners and approximate shortest path queries among obstacles
  in the plane.
\newblock In {\em ESA}, pages 514--528, 1996.

\bibitem{BenderLCA}
M.~A. Bender and M.~Farach{-}Colton.
\newblock The {LCA} problem revisited.
\newblock In {\em LATIN}, pages 88--94, 2000.

\bibitem{Cabello06}
S.~Cabello.
\newblock Many distances in planar graphs.
\newblock {\em Algorithmica}, 62(1-2):361--381, 2012.

\bibitem{Cabello17}
S.~Cabello.
\newblock Subquadratic algorithms for the diameter and the sum of pairwise
  distances in planar graphs.
\newblock In {\em SODA}, pages 2143--2152, 2017.

\bibitem{CabelloCE13}
S.~Cabello, E.~Chambers, and J.~Erickson.
\newblock Multiple-source shortest paths in embedded graphs.
\newblock {\em SIAM Journal on Computing}, 42(4):1542--1571, 2013.

\bibitem{Xu2000}
D.~Z. Chen and J.~Xu.
\newblock Shortest path queries in planar graphs.
\newblock In {\em STOC}, pages 467--478, 2000.

\bibitem{Cohen-AddadDW17}
V.~Cohen{-}Addad, S.~Dahlgaard, and C.~Wulff{-}Nilsen.
\newblock Fast and compact exact distance oracle for planar graphs.
\newblock In {\em FOCS}, 2017.
\newblock To appear.

\bibitem{Djidjev96}
H.~Djidjev.
\newblock On-line algorithms for shortest path problems on planar digraphs.
\newblock In {\em WG}, pages 151--165, 1996.

\bibitem{DriscollSST89}
J.~R. Driscoll, N.~Sarnak, D.~D. Sleator, and R.~E. Tarjan.
\newblock Making data structures persistent.
\newblock {\em J. Comput. Syst. Sci.}, 38(1):86--124, 1989.

\bibitem{FR06}
J.~Fakcharoenphol and S.~Rao.
\newblock Planar graphs, negative weight edges, shortest paths, and near linear
  time.
\newblock {\em J. Comput. Syst. Sci.}, 72(5):868--889, 2006.

\bibitem{F87}
G.~N. Frederickson.
\newblock Fast algorithms for shortest paths in planar graphs, with
  applications.
\newblock {\em SIAM Journal on Computing}, 16(6):1004--1022, 1987.

\bibitem{GKMSW17}
P.~Gawrychowski, H.~Kaplan, S.~Mozes, M.~Sharir, and O.~Weimann.
\newblock Voronoi diagrams on planar graphs, and computing the diameter in
  deterministic $\tilde{O} (n^{5/3})$ time.
\newblock {\em Arxiv 1704.02793}, 2017.
\newblock Submitted to SODA'18.

\bibitem{HartvigsenMardon}
D.~Hartvigsen and R.~Mardon.
\newblock The all-pairs min cut problem and the minimum cycle basis problem on
  planar graphs.
\newblock {\em Journal of Discrete Mathematics}, 7(3):403--418, 1994.

\bibitem{SommerICALP11}
K.~Kawarabayashi, P.~N. Klein, and C.~Sommer.
\newblock Linear-space approximate distance oracles for planar, bounded-genus
  and minor-free graphs.
\newblock In {\em ICALP}, pages 135--146, 2011.

\bibitem{KawarabayashiST13}
K.~Kawarabayashi, C.~Sommer, and M.~Thorup.
\newblock More compact oracles for approximate distances in undirected planar
  graphs.
\newblock In {\em SODA}, pages 550--563, 2013.

\bibitem{Klein2002}
P.~Klein.
\newblock Preprocessing an undirected planar network to enable fast approximate
  distance queries.
\newblock In {\em SODA}, pages 820--827, 2002.

\bibitem{K05}
P.~N. Klein.
\newblock Multiple-source shortest paths in planar graphs.
\newblock In {\em SODA}, pages 146--155, 2005.

\bibitem{KMS13}
P.~N. Klein, S.~Mozes, and C.~Sommer.
\newblock Structured recursive separator decompositions for planar graphs in
  linear time.
\newblock In {\em STOC}, pages 505--514, 2013.
\newblock Full version at \url{https://arxiv.org/abs/1208.2223}.

\bibitem{KleinMW10}
P.~N. Klein, S.~Mozes, and O.~Weimann.
\newblock Shortest paths in directed planar graphs with negative lengths: {A}
  linear-space $o(n\log^2n)$-time algorithm.
\newblock {\em {ACM} Trans. Algorithms}, 6(2):30:1--30:18, 2010.

\bibitem{Miller86}
G.~L. Miller.
\newblock Finding small simple cycle separators for 2-connected planar graphs.
\newblock {\em J. Comput. Syst. Sci.}, 32(3):265--279, 1986.
\newblock A preliminary version in \emph{16th STOC}, 1984.

\bibitem{Isolation}
R.~Motwani and P.~Raghavan.
\newblock {\em Randomized algorithms}.
\newblock Press Syndicate of the University of Cambridge, 1995.

\bibitem{MozesSommer}
S.~Mozes and C.~Sommer.
\newblock Exact distance oracles for planar graphs.
\newblock In {\em SODA}, pages 209--222, 2012.

\bibitem{Isolation2}
K.~Mulmuley, U.~Vazirani, and V.~Vazirani.
\newblock Matching is as easy as matrix inversion.
\newblock {\em Combinatorica}, 7(1):105--113, 1987.

\bibitem{Nussbaum11}
Y.~Nussbaum.
\newblock Improved distance queries in planar graphs.
\newblock In {\em WADS}, pages 642--653, 2011.

\bibitem{SleatorT83}
D.~D. Sleator and R.~E. Tarjan.
\newblock A data structure for dynamic trees.
\newblock {\em J. Comput. Syst. Sci.}, 26(3):362--391, 1983.

\bibitem{Thorup04}
M.~Thorup.
\newblock Compact oracles for reachability and approximate distances in planar
  digraphs.
\newblock {\em Journal of the ACM}, 51(6):993--1024, 2004.

\bibitem{AvWZ13}
F.~van Walderveen, N.~Zeh, and L.~Arge.
\newblock Multiway simple cycle separators and i/o-efficient algorithms for
  planar graphs.
\newblock In {\em SODA}, pages 901--918, 2013.

\bibitem{WNthesis}
C.~Wulff-Nilsen.
\newblock Algorithms for planar graphs and graphs in metric spaces, 2010.
\newblock PhD thesis, University of Copenhagen.

\bibitem{Wulff-Nilsen16}
C.~Wulff{-}Nilsen.
\newblock Approximate distance oracles for planar graphs with improved query
  time-space tradeoff.
\newblock In {\em SODA}, pages 351--362, 2016.

\end{thebibliography}

\newpage
\appendix

\section{Missing Proofs}

\separating*

\begin{proof}
The number of holes increases by at most one in every recursive call, but decreases by a constant
multiplicative factor every 3 recursive calls, and is initially equal to 0, so part (i) easily follows.
The number of nodes never increases, and decreases by a constant multiplicative factor
every 3 recursive calls, and is initially equal to $n$, so part (ii) follows.
The situation with the number of boundary nodes is slightly more complex,
because it increases by $O(\sqrt{n})$ in every recursive call, and decreases
by a constant multiplicative factor every 3 recursive calls, where $n$ is
the number of nodes in the current piece.
For simplicity, we analyze a different process, in which the number of boundary
nodes decreases by a constant multiplicative factor and then increases by
$O(\sqrt{n})$ in every recursive call. The asymptotic behavior of these two
processes is identical. Thus, we want to analyze the following recurrence:
\[ b(\ell+1) = b(\ell)/c + \frac{\sqrt{n}}{(c')^{\ell}}. \]
for some constants $c,c'>1$. Then
\[ b(\ell+1) = \sum_{i=0}^{\ell} \frac{\sqrt{n}}{c^{i}(c')^{\ell-i}} = \frac{\sqrt{n}}{(c')^{\ell}} \sum_{i=0}^{\ell} (\frac{c'}{c})^{i}. \]
We consider two cases:
\begin{enumerate}
\item $c' \leq c$, then $b(\ell+1) \leq \sqrt{n} \frac{\ell}{(c')^{\ell}} \leq \frac{\sqrt{n}}{(\sqrt{c'})^{\ell}}$ for $\ell$ large enough.
\item $c' > c$, then $b(\ell+1) = O(\frac{\sqrt{n}}{c^{\ell}})$.
\end{enumerate}
In both cases, $b(\ell) = O(\frac{\sqrt{n}}{c_{2}^{\ell}})$ for some constant $c_{2}>1$ as claimed.
\end{proof}

\mssp*

\begin{proof}
We proceed as in the original implementation of MSSP, that is, we represent every $T_{i}$ with a persistent link-cut tree.
We present some of the details, understanding them is required to explain how to implement the queries.

In MSSP, we start with constructing $T_{1}$ with Dijkstra's algorithm in $O(n\log n)$ time. Then, we iterate
over $i=2,3,\ldots,s$. The current $T_{i}$ is maintained with a persistent link-cut tree of Sleator and Tarjan~\cite{SleatorT83}.
The gist of MSSP is that every edge of the graph goes in and out of the shortest path tree at most once,
and that we can efficiently retrieve the edges that should be removed from and added to $T_{i-1}$ to obtain
$T_{i}$ (in $O(\log n)$ time per edge). Thus, if we are able to remove or insert an edge from $T_{i-1}$ in
$O(\log n)$ time, the total update time is $O(n\log n)$. With a link-cut tree, we can indeed remove or
insert edge in such time (note that we prefer the worst-case version instead of the simpler implementation
based on splay trees). We make our link-cut tree partially persistent with a straightforward application of the
general technique of Driscoll et al.~\cite{DriscollSST89}. This requires that the in-degree of the underlying
structure is $O(1)$, which is indeed the case if the degrees of the nodes in the graph (and hence in every
$T_{i}$) are $O(1)$. This can be guaranteed by replacing a node of degree $d>4$ by a cycle on $d$ nodes,
where every node has degree $3$.  We now verify that the in-degree is $O(1)$ for such a structure
by presenting a high-level overview of link-cut trees.

The edges of a rooted tree are partitioned into solid and dashed. There is at most one solid edge incoming
into any node, so we obtain a partition of the tree into node-disjoint solid paths. For every solid path,
we maintain a balanced search tree on a set of leaves corresponding to the nodes of the path in the
natural top-bottom order when read from left to right. To obtain a worst-case time bound, Sleator and Tarjan
use biased binary trees. Every node stores a pointer to the leaf in the corresponding biased binary
tree, and additionally the topmost node of a heavy path stores a pointer to its parent in the represented
tree (together with the cost of the corresponding edge). The nodes of every biased binary tree store
standard data (a pointer to the left child, the right child, and the parent) and, additionally, every inner
node (that corresponds to a fragment of a solid path) stores the total cost of the corresponding
fragment. An additional component of the link-cut tree is a complete binary tree on $n$ leaves corresponding
to the nodes of the tree (called $1,2,\ldots,n$). This is required, so that we can access a node of
the tree on demand in $O(\log n)$ time, as random access is not allowed in this setting. The access
pointer points to the root of the complete binary tree. Now we can indeed verify that the in-degree of
the structure is $O(1)$.

Assuming that a representation of every $T_{i}$ with a partially persistent link-cut tree is available,
we can answer the queries as follows.

First, consider calculating the distance from $v_{i}$
to some $v\in V$. We retrieve the access pointer of $T_{i}$ and navigate the complete binary tree
to reach the node $v$. Then, we navigate up in the link-cut representation of $T_{i}$ starting
from $v$. In every step, we traverse a biased binary tree starting from a leaf corresponding to
some ancestor $u$ of $v$. Conceptually, this allows us to jump to the topmost node of the
solid path containing $u$. While doing so, we accumulate the total cost of the prefix of that
solid path ending at $u$. Then, we follow the pointer from the topmost node of the current
solid path to reach its parent in $T_{i}$, add its cost to the answer, and continue in the next
biased binary tree. It is easy to see that in the end we obtain the total cost of the path from
$v$ to the root of $T_{i}$, and by the properties of biased binary trees the total number of steps
is $O(\log n)$.

Second, consider checking if $u$ is an ancestor of $v$ in $T_{i}$. We navigate in the link-cut
representation of $T_{i}$ starting from $u$ and marking the visited solid paths (in more detail,
every solid path stores a timestamp of the most recent visit; the timestamps are not considered
a part of the original partially persistent structure and the current time is increased after each query).
Then, we navigate starting from $v$, but stop as soon as we reach a solid path already visited in
the previous step. For $u$ to be an ancestor of $v$, this must be the path containing $u$,
and furthermore $u$ must be on the left of $v$ in the corresponding biased binary tree.
This can be all checked in $O(\log n)$ time.

Third, consider checking if $u$ occurs before $v$ in the preorder traversal of $T_{i}$. By
proceeding as in the previous paragraph we can identify the LCA of $u$ and $v$, denoted
$w$. Assuming that $w\neq u$ and $w\neq v$, we can also retrieve the edge outgoing
from $w$ leading to the subtree containing $u$, and similarly for $v$, in $O(\log n)$
total time. We can also retrieve the edge incoming to $w$ from its parent in $O(1)$
additional time. Then, we check the cyclic order on the edges incident to $w$
in the graph to determine if $u$ comes before $v$ in the preorder traversal of $T_{i}$
(this is so that we do not need to think about an embedding of $T_{i}$ while maintaining
the link-cut representation).
\end{proof}

\mssp*

\begin{proof}
We construct an $r$-division $R$ of the graph. The structure consists of parts: micro components and macro component.

Consider a shortest path tree $T_{i}$. We construct a new (smaller) tree $T'_{i}$
as follows. First, mark in $T_{i}$ $v_{i}$ and all boundary nodes of $R$. Then, $T'_{i}$ is the tree induced by the marked nodes
in $T_{i}$ (in other words: for any two marked nodes we also mark their LCA in $T_{i}$, and then construct $T'_{i}$ by connecting
every marked node to its first marked ancestor with an edge of length equal to the total length of the path connecting them
in $T_{i}$. Then, $|T'_{i}| = O(n/\sqrt{r})$. Intuitively, $T'_{i}$ gives us a high-level overview of the whole $T_{i}$. We augment
it with the usual preorder numbers and LCA data structure. We call this the macro component.

For every piece $R_{j}$ of $R$, consider the subgraph of $T_{i}$ consisting of all edges belonging to $R_{j}$. This subgraph
is a collection of trees rooted at some of the boundary nodes of $R_{j}$. We represent this forest $T_{i,j}$ with a persistent link-cut
forest. While sweeping through the nodes $v_{1},v_{2},\ldots,v_{s}$, every edge of the graph goes in and out of the shortest
path tree at most once. Hence, every edge of $R_{j}$ goes in and out of $T_{i,j}$ at most once. Thus, the persistent link-cut
representation of $T_{i,j}$ takes $O(|R|\log |R|)$ time and space. To answer a query concerning $T_{i,j}$, we first need
to retrieve the corresponding version of the link-cut forest. This can be done with a predecessor search, if we store
a sorted list of the values of $i$ together with a pointer to the corresponding version, in $O(\log r)$ time, as there are at
most $O(|R|)$ versions. Then, a query concerning $T_{i,j}$ can be answered in $O(\log r)$ time.

We claim that combining the micro and macro components allows us to answer any query in $O(\log r)$ time.
Consider calculating the distance from $v_{i}$ to some $v\in V$. We retrieve the piece $R_{j}$ containing $v$ and,
by using the micro component, find the root $r$ of the tree containing $v$ in the forest $T_{i,j}$ together with
the distance from $r$ to $v$ in $O(\log r)$ time. Then, $r$ is a boundary node, so the macro component allows
us to find the distance from $v_{i}$ to $r$ in $O(1)$ time. Other queries can be processed similarly by first look
at the pieces containing $u$ and $v$, then replacing them by appropriate boundary nodes, and finally looking at
the macro component.

The total space is clearly $O(n\log r)$ to represent all the micro components, and $O(s\cdot n/\sqrt{r})$ for
the macro component. To bound the preprocessing time, observe that constructing the macro component requires
extracting $O(n/\sqrt{r})$ nodes from the persistent link-cut representation of $T_{i}$. If, instead of accessing
them one by one, we work with all of them at the same time, can be seen to take $O(n/\sqrt{r}\log r)$ time by the convexity of log.
\end{proof}

\end{document}